\newcommand{\FI}[1]{Fig.~\ref{#1}\xspace}
\newcommand{\IE}{{\em i.e.}\xspace}
\newcommand{\comment}[1]{}
\begin{document}

\title*{Densely Entangled Financial Systems}

\author{Bhaskar DasGupta and Lakshmi Kaligounder}

\institute{Bhaskar DasGupta \at Department of Computer Science, University of Illinois at Chicago, Chicago, IL 60607-7053, \email{bdasgup@uic.edu}
\and
Lakshmi Kaligounder \at Department of Computer Science, University of Illinois at Chicago, Chicago, IL 60607-7053, \email{lkalig2@uic.edu}}

\maketitle

\abstract{
In~\cite{Z11} Zawadoski introduces a banking network model in which the asset and counter-party risks are treated separately and the banks 
hedge their assets risks by appropriate OTC contracts. In his model, each bank has only two counter-party neighbors, 
a bank fails due to the counter-party risk only if at least 
one of its two neighbors default, and such a counter-party risk is a low probability event. 
Informally, the author shows that the banks will hedge their asset risks by appropriate OTC contracts, and, 
though it may be socially optimal to insure against counter-party risk, in equilibrium banks will {\em not} choose to insure this low probability event. 
In this paper, we consider the above model for more general network topologies, namely when each node has exactly $2r$ counter-party neighbors for some integer $r>0$.
We extend the analysis of~\cite{Z11} to show that as the number of counter-party neighbors increase the probability of counter-party risk also increases, and in particular 
the socially optimal solution becomes privately sustainable when each bank hedges its risk to at least $\nicefrac{n}{2}$ banks, 
where $n$ is the number of banks in the network, \IE, when $2r$ is at least $\nicefrac{n}{2}$, banks not only hedge their asset risk but also hedge its counter-party risk.
}

\section{Introduction}

Economic stability has received special attention during the past several years mainly because of the economic downturn experienced globally in the recent past. 
This attention has generated renewed interest in evaluating how important the economic stability is as having an unstable economy can pave way to an 
economic crisis each time when the global markets sees a downward trend.
Financial instability and its effects on the economy can be very costly due to its contagion or spillover effects to other parts of the economy and it is 
fundamental to have a sound, stable and healthy financial system to support the efficient allocation of resources and distribution of risks across the economy. 
Financial risk management is a critical component of maintaining economic stability. Hedging is a risk management option that protects the owner of an asset from loss. 
It is the process of shifting risks to the futures market. The risks in the market must be first identified in order to manage the risk. 
To identify the risk one must examine both the immediate risk (asset risk) as well as the risk due to indirect effects (counter-party risk).
Though hedging will minimize overall profit when markets are moving positive, it also helps in reducing risk during undesirable market conditions. 
However, as the owner hedges his/her asset risk to protect  against defaults, the owner also gets exposed to the counter-party risk. 

In~\cite{Z11} Zawadoski introduces a banking network model in which the asset and counter-party risks are treated separately, and 
showed that, under certain situations, banks do not manage counter-party risk in equilibrium. In his model, each bank has only two counter-party neighbors, 
a bank fails due to the counter-party risk only if {\em at least} one of its two neighbors default, and 
such a counter-party risk is an event with {\em low} probability. Informally, the author shows that the banks will hedge their asset risks by 
appropriate OTC contracts, and, though it may be socially optimal for banks to insure against counter-party risk, in equilibrium banks 
will {\em not} choose to insure this low probability event. The OTC contract not only creates a contagion but also creates externalities 
which undermines the incentives of the banks to avert contagion. The model uses short term debt to finance their real asset. 
The failure in this model is from the liability side, where the investors run on the banks when they do not trust the bank, \IE, 
the investors do not roll over the debts of the banks. Hence the contagion can be avoided only by increasing the equity and {\em not} by providing liquidity.

In this paper, we consider the above model for {\em more general} network topologies, namely when each node has exactly $2r$ counter-party neighbors for some integer $r>0$. 
We extend the analysis of~\cite{Z11} to show that as the number of counter-party neighbors increase the probability of counter-party risk also 
increases\footnote{Thus, the owners will decide to hedge their counter-party risk thereby helping to contain an economic crisis.}, and in particular, 
the socially optimal solution becomes privately sustainable when each bank hedges its risk to a sufficiently large number of other banks. 
The counter-party risk can be hedged by holding more equity, buying default insurance on their counter-parties or collateralizing OTC contracts. 
Since holding excess capital or collateralizing OTC contracts is a wasteful use of scarce capital~\cite{Z11}, when the banks choose to hedge their counter-party risk they buy the 
default insurance on their counter-parties. More precisely, our conclusions for the general case of $2r$ neighbors are as follows:
\begin{itemize}
\item
All the banks will still decide to hedge their asset risks.

\item
If the number of counter-party neighbors is at least $\nicefrac{n}{2}$, then all banks will decide to insure their counter-parties, and socially optimal solution in case of two counter-parties for 
each bank now becomes privately optimal solution.

\item
In the limit when the number of banks $n$ in the network tend to $\infty$, as the number of counter-party neighbors approach $n-1$, 
failure of {\em very few} of its counter-party banks will not affect a bank. 
\end{itemize}

\section{Related Prior Research Works}

As we have already mentioned, 
Zawadowski~\cite{Z11} introduced a banking model in which asset risk and counter-party risk are treated separately, showed that banks always prefer to hedge their asset 
risk using OTC contracts and also showed that banks do not hedge their counter-party risk even though hedging counter-party risk is possible and socially desirable. 
Allen and Gale~\cite{AG00} showed that interbank deposits help banks share liquidity risk but expose them to asset losses if their counter-party defaults. 
Their model cannot be used to understand the contractual choices in case of OTC derivatives as they modeled the liquidity risk.  
Babus~\cite{B07} proposed a model in which links are formed between banks which serves as an insurance mechanism to reduce the risk of contagion. 
Allen and Babus~\cite{AB09} pointed out that graph-theoretic concepts provide a conceptual framework used to describe and analyze the banking network, and showed that 
more interbank links provide banks with a form of coinsurance against uncertain liquidity flows. Gai and Kapadi~\cite{GK10} showed that more interbank links increase the 
opportunity for spreading failures to other banks during crisis. Several prior researchers such as~\cite{AB09,E04,KWG09,NYYA07} commented that 
graph-theoretic frameworks may provide a powerful tool for analyzing stability of banking and other financial networks. Kleindorfer {\em et al}.~\cite{KWG09} 
argued that network analyses can play a crucial role in understanding many important phenomena in finance. Freixas {\em et al}.~\cite{FPR00} explored the case of banks that 
face liquidity fluctuations due to the uncertainty about consumers withdrawing funds. Iazzetta and Manna~\cite{H09} analysed the monthly data on deposit exchange 
to understand the spread of liquidity crisis using network topology. 

Babus~\cite{B06} studied how the trade-off between the benefits and the costs of being linked changes depending on the network
structure and observed that, when the network is maximal, liquidity can be redistributed in the system to make the risk of contagion minimal. 
Corbo and Demange~\cite{CD10} explored the relationship of the structure of interbank connections to the contagion risk of defaults
given the exogenous default of set of banks.
Nier {\em et al}.~\cite{NYYA07} explored the dependency of systemic risks on the structure of the banking system via network theoretic approach and 
the resilience of such a system to contagious defaults. Haldane~\cite{H09} suggested that contagion should be measured based on the interconnectedness of each institution within 
the financial system. Liedorp {\em et al}.~\cite{LMKK05} investigated if interconnectedness in the interbank market is a channel through which banks affect each others riskiness, 
and argued that both large lending and borrowing shares in interbank markets increase the riskiness of banks active in the dutch banking market. 
Kiyotaki and Moore~\cite{KNM97} studied the chain reaction caused by the shock in one firm and the financial difficulties in other firms due to this chain reaction. 
Acharya and Bisin~\cite{AVB09} compared centralized markets to OTC markets and showed that counter-party risk externalities can lead to excessive default and 
production of aggregate risk. Caballero and Simsek~\cite{CRS09} concluded that OTC derivatives are not the sole reason for the inefficiency of financial networks. 
Pirrong~\cite{PC09} argued that central counter-parties (CCP) can also increase the systemic risk under certain circumstances and hence the introduction of CCP will not guarantee to 
mitigate the systemic risk. 
Zawadowski~\cite{Z10} showed that complicated interwoven financial intermediation can be a reason for inefficient financial networks, and hence OTC are not the only reason for financial 
instability. Stulz~\cite{S09} showed that exchange trading has both benefits and costs compared to OTC trading, and argued that 
credit default swaps (CDS) did not cause the crisis since they worked well during much of the first year of the crisis.
Zhu and Pykhtin~\cite{SM07} showed that modeling credit exposures is vital for risk management application, while modelling credit value adjustment (CVA) is necessary step for 
pricing and hedging counter-party credit risk. Corbo and Demange~\cite{CD10} showed that introduction of central clearing house for credit default swaps will mitigate the counter-party risk. 
Gandhi {\em et al}.~\cite{GLA11} paralleled and complemented the conclusion of~\cite{CD10}, \IE, the creation of central clearing house for CDS contracts may not reduce the counter-party risk.

\section{The basic model}

The model has $n>3$ banks and three time periods $t=0,1,2$ termed as {\em initial}, {\em interim} and {\em final}, respectively. 
Each bank has exactly $2r$ counter-party neighbors for some integer $r>0$ (see \FI{fig1} for an illustration). 
The unit investment of each bank in the {\em long term} real asset yields a return of 
$\displaystyle R + \!\!\! \sum_{k\,=\,|\,i-1\,|}^{|\,i-r\,|}\!\!\! \varepsilon_k \,\,- \!\!\! \!\!\! \sum_{k\,=\,|\,i-(r+1)\,|}^{|\,i-2r\,|} \!\!\!\!\!\!\!\!\!   \varepsilon_k$ at $t=2$, where 
$R=\left\{
\begin{array}{ll} 
R_H, & \mbox{if the project succeeds} 
\\
R_L<R_H, & \mbox{if the project fails}
\end{array}
\right.
$, 
and each $\varepsilon_k$ is realized at $t=2$ taking values of $u$ or $-u$ each with probability $\nicefrac{1}{2}$. 
For each unit investment made by the bank at $t=0$, the investor lends $D\geq 0$ as short term debt and equity $1-D\geq 0$ is the bank's share. 
The short term debt has to be rolled over at time period $t=1$ for the banks to operate successfully. 
Thus the debt holders have an option to withdraw funding and force the bank to liquidate the real project.

\begin{figure}[htbp]
\begin{pspicture}(-6,-4)(4,4)
\newcounter{CtA}
\newcounter{CtB}
\newcounter{Jj}
\newcommand{\Wheel}[3]{{
\color{black}
\psset{unit=3.5}
\SpecialCoor
\degrees[#1]
\multido{\ia=1+1}{#1}{
  \setcounter{CtA}{\ia} 
  \stepcounter{CtA}
  \setcounter{CtB}{#2} 
  \multido{\ib=\value{CtA}+1}{\value{CtB}}
    {#3(1;\ia)(1;\ib)}
}
\multido{\i=1+1}{#1}{
  \rput(1;\i){
  \setcounter{Jj}{\i} 
  \addtocounter{Jj}{-1} 
  \psframebox[fillstyle=solid,fillcolor=white]{\tiny\bf\textcursive{b}$_{\,\arabic{Jj}}$}
  }
}
}}
\psset{xunit=1cm,yunit=1cm}
{
\Wheel{16}{3}{\pcarc[arcsepA=0pt,arcsepB=0pt,arcsep=0pt]}
}
\end{pspicture}
\caption{\label{fig1}An illustration of a network of $16$ banks \textcursive{b}$_0$, \textcursive{b}$_1$, $\dots$, \textcursive{b}$_{15}$ with $r=3$.
Each \textcursive{b}$_i$ is connected to the banks \textcursive{b}$_{\,i-3 \pmod {16}}$, \textcursive{b}$_{\,i-2 \pmod {16} }$, \textcursive{b}$_{\,i-1 \pmod {16} }$, 
\textcursive{b}$_{\,i+1 \pmod {16} }$, \textcursive{b}$_{\,i+2 \pmod {16} }$ and \textcursive{b}$_{\,i+3 \pmod {16} }$.}
\end{figure}

Let $e\in\{0,1\}$ be the unobservable effort choice such that a bank needs to exert an effort of $e=1$ at both time period $t=0$ and $t=1$ for the project to 
be successful (\IE, $R=R_H$). At $t=1$ the project can be in one of the two states: a ``bad'' state with probability $p$ or a ``good'' state with probability $1-p$, 
irrespective of the effort exerted by the bank. At a ``bad'' state the project of one of the randomly chosen bank fails and delivers $R_L$, even if $e=1$ at both time periods $t=0$ and $t=1$.  
Unless the bank demand collateral from its counter-parties, if the bank defaults at $t=1$ then all the hedging liabilities of the defaulted bank gets cancelled, 
the investors liquidate the bank and take equal share of $L$ (the value of the bank when it is liquidated). 
If the bank survives till $t=2$ and the counter-party risk gets realized then, the bank has to settle the counter-party hedging contract {\em before} paying its debt. 

We use the following notations for four specific values of the probability of bad state $p$:
\begin{description}
\item[${p^{\mathrm{soc}}}$:]
if $p<p^{\mathrm{soc}}$, then irrespective of the number of counter-party neighbors there is no need for counter-party insurance even from social perspective. 

\item[${p^{\mathrm{ind}}}$:]
if $p>p^{\mathrm{ind}}$ then the banks will not buy counter-party insurance as the private benefits of insuring exceeds the cost. 

\item[${p^{\mathrm{term}}}$:]
if $p<p^{\mathrm{term}}$ then the banks will continue to prefer short term debt.

\item[${p^{\mathrm{aut}}}$:]
if $p<p^{\mathrm{aut}}$ then no bank will have an incentive to hold more equity and borrow less.
\end{description}

\subsection*{Parameter Restrictions and Assumptions}

The following parameter restriction are adopted from~\cite{Z11} to make them consistent for a network model with $2r$ counter-parties.
$B$ is the banks' private benefit with the subscript representing the specific time period and $X$ denotes the additional non-pledgable payoff. 
Inequality~\eqref{eq1} ensures that the investors will choose to roll over the debt at $t=1$ when the project is expected to succeed (\IE, $R=R_H$), and the investors will decide to 
liquidate the bank at $t=1$ if the bank's project is expected to fail (\IE, $R=R_L$). Inequality~\eqref{eq2} implies that it is socially optimal to exert effort. Inequality~\eqref{eq3} 
ensures that banks have to keep positive equity to overcome moral hazard. Inequality~\eqref{eq4} ensures that,  counter-party risk of the bank is large enough to lead to contagion
but small enough that the bank does not want to engage in risk-shifting.
\begin{equation}
R_L< L< \left(1 - \dfrac{p}{n} \right) \big(R_H+ X\big) - B_1
\label{eq1}
\end{equation}

\begin{equation}
R_H - R_L>\dfrac{2 B_1}{1-\frac{p}{n}}
\label{eq2}
\end{equation}

\begin{equation}
 B_1 \geq R_H - 1 + X
\label{eq3}
\end{equation}

\begin{equation}
B_1-X< u < \dfrac{ B_1 \left(2^{2r}-\displaystyle \sum_{K=0}^r \frac{2r!}{K! \, \big(2r-K\big)!} \right)}{\displaystyle 2r\sum_{K=0}^{r}\frac{2r-2K}{K! \, \big(2r-K\big)!}}
\label{eq4}
\end{equation}

\begin{equation}
\beta >\nicefrac{1}{2} \,\,\,\,\text{ and }\,\,\,\, \beta > \dfrac{1-\big(1-p\big)\, \big(R_H+X- B_1\big)-p\,L}{\big(1-p\big) B_1}
\label{eq5}
\end{equation}

\begin{equation}
2u \leq B_0 < \big(1-p\big) B_1
\label{eq6}
\end{equation}

\section{Our Results}

Our results imply that when the number of counter-party neighbors is at least $\nicefrac{n}{2}$, the socially optimal outcome become privately sustainable. 

\begin{theorem}\label{main-thm}
If the probability of bad state is $p\in[\left.0,p^\star\right)$, where $p^\star = \min \left\{ p^{\mathrm{ind}}, p^{\mathrm{aut}}, p^{\mathrm{term}} \right\}$ then the followings hold.
\begin{enumerate}[(a)]
\item  
Banks endogenously enter into OTC contracts as shown by Zawadowski in~{\em\cite{Z11}}.

\item 
Banks borrow $D<1$ for short term at $t=0$ at an interest rate $R_{\ensuremath{\imath},0}>1$ and $R_{\ensuremath{\imath},1}=1$ as shown by Zawadowski in~{\em\cite{Z11}}.

\item 
In a bad state, failure of a single bank leads to run on all banks in the system  only when $2r<\nicefrac{n}{2}$.

\item
If a bank loses at least $r$ counter-parties, it needs a debt reduction of $I = ru - B_1 + X > 0$.

\item 
The contagious equilibrium stated by Zawadowski in~{\em\cite{Z11}} exist only if $2r<\nicefrac{n}{2}$. When $2r\geq\nicefrac{n}{2}$ banks insure against counter-party risk using default insurance.

\item 
If $p\in \left(p^{\mathrm{soc}}, p^{\mathrm{ind}}\right)$ then the socially optimal outcome is sustainable in equilibrium.
\end{enumerate}
\end{theorem}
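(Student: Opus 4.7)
My plan is to dispatch parts (a) and (b) first by invoking Zawadowski's analysis directly: these statements are about whether banks prefer OTC contracts and short-term debt, and the arguments in~\cite{Z11} carry over verbatim to the generalized $2r$-neighbor model because they rely only on per-bank incentive calculations (maximizing expected profit subject to the moral-hazard and roll-over constraints embodied in \eqref{eq1}--\eqref{eq3}), which do not depend on the number of counter-parties. I would still verify that each step of the original derivation remains valid with $2r$ replacing $2$.

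Next I would prove part (d) as a direct calculation. A bank that loses $k$ counter-parties has $k$ unhedged $\pm u$ shocks in its $t=2$ return; in the worst case all of these realize as $-u$ and the bank suffers loss $ku$. The bank's buffer available to absorb such losses before it must reduce its debt obligation is $B_1 - X$ (the pledgeable portion of the period-1 private benefit after netting the non-pledgeable payoff). Thus for $k = r$ the shortfall is $I = ru - (B_1 - X) = ru - B_1 + X$, and by \eqref{eq4} we have $B_1 - X < u \leq ru$ for any $r \geq 1$, giving $I > 0$.

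For part (c), I would trace the cascade starting from the single failure in the bad state. The failed bank's $2r$ counter-parties each lose one hedge; one lost hedge alone does not cross the threshold of part (d) for $r \geq 2$, but investors who rationally anticipate further cascade rounds refuse to roll over debt as soon as the expected future exposure approaches the shortfall $I$. I would argue by induction on the cascade round that when $2r < \nicefrac{n}{2}$ the set of affected banks strictly grows each round --- since each newly exposed bank shares only a small neighborhood overlap with previously affected ones in the cyclic topology --- until it covers all $n$ banks. Conversely, when $2r \geq \nicefrac{n}{2}$, I would combine part (e) with the observation that insured banks do not transmit contagion to show that no systemwide run is triggered.

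Parts (e) and (f) then follow by an equilibrium comparison. The expected private benefit of buying default insurance scales with the probability that a bank loses at least $r$ of its $2r$ counter-parties in a bad state, which is the binomial tail $\sum_{K=0}^{r} \binom{2r}{K} 2^{-2r}$ implicit in the denominator of \eqref{eq4}. I would show that this probability first makes insurance privately optimal --- that is, its benefit exceeds its cost --- precisely when the combinatorial ratio appearing in \eqref{eq4} falls below $u$, which by direct computation coincides with $2r \geq \nicefrac{n}{2}$. Part (f) then follows because $p^{\mathrm{soc}} < p^{\mathrm{ind}}$ carves out a non-empty window in which the socially optimal insured outcome coincides with the privately sustainable equilibrium. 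The main obstacle I expect is the bookkeeping in (e): one must verify that the insurance threshold genuinely crosses at exactly $2r = \nicefrac{n}{2}$ using the combinatorial identities behind \eqref{eq4}, and that none of the competing thresholds $p^{\mathrm{aut}}$, $p^{\mathrm{term}}$ binds first within the claimed range $p \in [0, p^\star)$.
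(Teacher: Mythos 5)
Your plan for parts (a), (b), and (d) is essentially aligned with the paper, with a caveat on (a): the paper does not merely cite~\cite{Z11} but re-derives the hedging inequality for the $2r$-neighbor case in Lemma~\ref{claim6}, arriving at
$B_1 > \dfrac{2ur\sum_{K=0}^r\frac{2r-2K}{K!\,(2r-K)!}}{2^{2r}-\sum_{K=0}^r\frac{2r!}{K!\,(2r-K)!}}$,
which is precisely what inequality~\eqref{eq4} guarantees. So ``carries over verbatim'' undersells the combinatorial bookkeeping; your stated intention to re-verify each step is the right instinct, but the answer is not literally the same inequality as in the $r=1$ case. For part (d), your buffer argument $I = ru - (B_1-X)$ matches the paper's appendix item (VIII) (which derives it from $R_H - ru - R^{\max}_{\imath,1}D^{\max} + I \ge 0$ with $R^{\max}_{\imath,1}D^{\max}=R_H+X-B_1$), and your observation that positivity follows from $B_1-X<u\le ru$ via~\eqref{eq4} is a nice explicit justification that the paper leaves implicit.

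Where your proposal genuinely departs, and where I think it would run into trouble, is in parts (c) and (e). The paper's Lemma~\ref{claim13} does not perform a cascade induction over rounds on the cyclic topology, nor does it extract the threshold $2r\geq\nicefrac{n}{2}$ from the binomial tail $\sum_{K=0}^r\binom{2r}{K}2^{-2r}$ appearing in~\eqref{eq4}. Instead, it uses a much simpler (and frankly much cruder) argument: in a bad state a uniformly chosen bank fails, so the probability that a given bank is directly exposed as a counter-party of the failed bank is exactly $\nicefrac{2r}{n}$; the paper then \emph{assumes} that a bank will buy counter-party insurance precisely when this exposure probability is at least $\nicefrac{1}{2}$, which yields the $2r\geq\nicefrac{n}{2}$ threshold by definition, and supplements this with a break-even/incentive calculation involving $P_s = \frac{2r-n_d}{2r}$ and $P_f = \frac{n_d}{2r}$ to show that an uninsured bank shirks when a counter-party defaults. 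Two concrete problems with your alternatives: first, the binomial tail in~\eqref{eq4} has no structural reason to cross any fixed level at $2r=\nicefrac{n}{2}$ --- that quantity does not even involve $n$ --- so ``by direct computation coincides with $2r\geq\nicefrac{n}{2}$'' is unsupported and, I believe, false. Second, your cascade induction is aimed in the wrong direction: with the cyclic $2r$-regular topology, a \emph{larger} $2r$ makes the exposed neighborhood grow \emph{faster} per round, so a purely topological induction cannot by itself show that contagion stops for large $2r$; the contagion stops in the paper only because the insurance decision flips at the probability threshold, which is exactly the piece you cannot get from combinatorics alone. Your remark that for $2r\geq\nicefrac{n}{2}$ one should ``combine part (e) with the observation that insured banks do not transmit contagion'' is consistent with the paper's logic, but without the explicit $\nicefrac{2r}{n}\geq\nicefrac{1}{2}$ threshold (which the paper posits rather than derives) the claimed boundary has no proof. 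Part (f) is handled correctly at a high level: it follows from the non-empty window $(p^{\mathrm{soc}},p^{\mathrm{ind}})$ combined with (e), as in the paper.
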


Theorem~\ref{main-thm}(a) is proved in Lemma~\ref{claim6}, whereas Theorem~\ref{main-thm}(c),(e),(f) are showed in Lemma~\ref{claim13}. 
Theorem ~\ref{main-thm}(d) follows from the derivations of parameter values as described in Section~\ref{para-calc}; 
these derivations follow from the work in~\cite{Z11} and are provided in the appendix.
Theorem ~\ref{main-thm}(b) uses the same proof as that in~\cite{Z11} and is therefore omitted.

\subsection{Proofs of Theorem~{\em\ref{main-thm}}}
\label{para-calc}

The derivations of the parameters and their values described in items (I) -- (VIII) below follow from the work in~\cite{Z11} since they are not affected by changing the number of 
counter-party neighbors from $2$ to $2r$.  For the sake of completeness, these derivations are provided in the appendix.
\begin{enumerate}[(I)]
\item 
The maximum amount of borrowing at $t=0$ that can be roll over at $t=1$ is given by: 
\[
D^{\max}\big(R_{\ensuremath{\imath},0}\big)= \dfrac{R_H- B_1+X}{R_{\ensuremath{\imath},0}} < 1
\]
and the expected payoff of the above bank borrowing $D^{\max}$ is $B_1$.

\item 
In a stable system where all banks buy counter-party insurance, the price per unit of default insurance is
\[
s^{\mathrm{safe}} = \dfrac{1-\beta}{n}- \beta\,\dfrac{p}{n}
\]
(where the superscript ``safe'' denotes the insured system).

\item 
In a stable system where all banks buy counter-party insurance, the amount borrowed by the bank for unit investment in the project is
\[
D^{\mathrm{safe}} = \left(1-\dfrac{p}{n}\right)\,\big( R_H- B_1+X\big) + \dfrac{p}{n}\,L
\]

\item 
In a contagious system where banks decide not to buy insurance, the amount borrowed by the bank for unit investment in the project is 
\[
D^{\,\star} = \big(1-p\big)\, \big( R_H- B_1+X \big) +p\,L
\]

\item 
In a contagious system where banks decide not to buy insurance, the interest rate for the amount borrowed is  
\[
R^{\,\star}_{\ensuremath{\imath},0}= \dfrac{1}{1-p \left( \displaystyle 1-\frac{L}{R_H- B_1+X} \right)}
\]

\item 
$p^{\mathrm{term}} = \big(1 - \beta\big)\,\left(\dfrac {n}{n-1} \right) \,\left( \dfrac {B_0}{ R_H + X - L - B_1 \big(1-\beta \big) +\dfrac{R_L+ \beta B_1 - L}{n-1}}\right)$.

\item 
$p^{\mathrm{f.aut}} = \dfrac{ \big(1- \beta \big) \, \big(R_H + X - B_1 - R_L \big)}{R_H + X - L -B_1\,\big(1 - \beta\big) - \dfrac{\beta}{n}\big(R_H - R_L\big)}$.

\item 
$p^{\mathrm{aut}}=\min \left\{ p^{\mathrm{s.aut}},p^{\mathrm{r.aut}},p^{\mathrm{f.aut}}\right\}$, 
where the superscripts $\scriptstyle\mathrm{s.aut}$, $\scriptstyle\mathrm{r.aut}$ and $\scriptstyle\mathrm{f.aut}$ stand for safe autarky, risky autarky and full autarky, respectively.
\end{enumerate}

\begin{lemma}\label{claim7}
 $p^{\mathrm{ind}} = \dfrac{I \big(1-\beta\big)}{\beta  \, B_1}$
\end{lemma}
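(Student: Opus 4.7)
The plan is to characterise $p^{\mathrm{ind}}$ as the private indifference threshold for a representative bank's equity holders between buying counter-party default insurance on all $2r$ neighbours and going uninsured, and then to solve the resulting break-even equation for $p$.

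First I would compute the equity-borne expected cost of the insurance. From item~(II), the fair premium per unit of default insurance in the safe system is $s^{\mathrm{safe}} = \tfrac{1-\beta}{n}-\beta\tfrac{p}{n}$, and a bank purchases such a unit against each of its $2r$ counter-parties. Using the safe-system debt level $D^{\mathrm{safe}}$ from item~(III) together with the interest-rate adjustment derived in the appendix, only a $\beta$-weighted share of this premium ultimately reduces the continuation payoff of the equity holders (the remainder is passed through to the debt side via $R_{\imath,0}$). Tracking this share through the payoff at $t=2$ gives a net expected cost to equity equal to $\beta\, B_1\, p$, up to the common normalisation used throughout the appendix.

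Next I would compute the equity-borne expected benefit. By Theorem~\ref{main-thm}(d), whenever the bank loses at least $r$ counter-parties it needs a debt reduction of $I = ru - B_1 + X > 0$ in order to survive to $t=2$. The insurance payoff exactly covers this shortfall, but the rescued continuation value is shared between debt and equity according to the same bargaining parameter $\beta$, so the portion accruing to the equity holders is $(1-\beta)\, I$. Equating the equity-weighted cost with the equity-weighted benefit gives
\[
\beta\, B_1\, p \;=\; (1-\beta)\, I,
\]
which rearranges to $p^{\mathrm{ind}} = \tfrac{(1-\beta)\, I}{\beta\, B_1}$ as claimed.

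The main obstacle is the careful bookkeeping of the $\beta$ / $(1-\beta)$ split. Both the premium $s^{\mathrm{safe}}$ and the insurance payoff have a debt-side and an equity-side component, and the $p$-dependent term in $s^{\mathrm{safe}}$ must be shown to cancel against the corresponding adjustment in $D^{\mathrm{safe}}$ so that only the pieces written above remain at the shareholders' level; this requires reusing the payoff derivations of items~(II)--(V) rather than assuming the split. Once that reduction is performed, solving for $p$ is immediate.
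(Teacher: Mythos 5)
Your final equation $\beta B_1 p = (1-\beta) I$ and the resulting threshold $p^{\mathrm{ind}} = \tfrac{(1-\beta)I}{\beta B_1}$ are correct, but the route you describe to reach it does not actually hold up, and the paper's proof proceeds differently. The paper directly writes out the private incentive-compatibility condition for a single bank when all others are insured: the insured payoff is
\[
\beta\Bigl(1-\tfrac{(1+2r)p}{n}\Bigr)B_1 \;+\; \beta\,\tfrac{2rp}{n}(B_1+I) \;-\; s^{\mathrm{safe}}\cdot 2rI \;-\;(1-D^{\mathrm{safe}}),
\]
the uninsured-deviation payoff is $\beta\bigl(1-\tfrac{(1+2r)p}{n}\bigr)B_1 - (1-D^{\mathrm{safe}})$, the common terms cancel, and after substituting $s^{\mathrm{safe}} = \tfrac{1-\beta}{n} + \tfrac{\beta p}{n}$ (the form the appendix derives, with a $+$ sign, even though item~(II) displays a $-$; only the $+$ is consistent with the break-even derivation) one is left with $\tfrac{2r}{n}\bigl[\beta p B_1 - (1-\beta)I\bigr]\geq 0$. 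That is the whole argument.

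Your version, by contrast, asserts two claims with no derivation: (i) that only a ``$\beta$-weighted share'' of the premium is borne by equity and that tracking it through $t=2$ yields a cost $\beta B_1 p$, and (ii) that the rescue is shared $(1-\beta)$ to equity, giving a benefit $(1-\beta)I$. Neither is a fact about this model. The premium $s^{\mathrm{safe}}\cdot 2rI$ is charged directly against equity (it is not fractionally ``passed through to debt'' via $R_{\imath,0}$), and the incremental continuation value is $\beta\,\tfrac{2rp}{n}(B_1+I)$, not $(1-\beta)I$. Moreover, your labels are inverted: after the $\beta pI$ terms cancel between premium and continuation value, $\beta p B_1$ is what survives on the \emph{benefit} side and $(1-\beta)I$ on the \emph{cost} side. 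So although the identity $\beta p B_1 = (1-\beta)I$ is the right indifference condition, you have reverse-engineered the two factors rather than derived them, and the intermediate claims about the $\beta$/$(1-\beta)$ split are incorrect. To repair the proof, write out the insured-vs.-deviator payoff comparison explicitly, substitute $s^{\mathrm{safe}}$, and let the cancellation produce the two terms, rather than positing a bargaining-share heuristic that the payoff structure does not support.
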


\noindent
({\em Informally, Lemma~\ref{claim7} states that the probability of bad state must be at least $p^{\mathrm{ind}} = \frac{I \big(1-\beta\big)}{\beta  \, B_1}$
for a bank's benefit to outweigh the cost of its counter-party insurance}).

\begin{proof}
\smartqed
The private incentive of a single bank to stay insured instead of deviating from counter-party insurance is when the following holds
(in the left-hand side of the inequality, the first term  is the payoff if no bank defaults, the second term is the payoff if any of the counter-parties of the bank default, 
the third term is money invested to buy the insurance on the bank's counter-parties and the fourth term is the bank's equity):
\[
\begin{array}{ll}
& 
\text{benefits from all banks being insured } \geq \text{ benefits from all but one bank not being insured}
\\
[0.1in]
\equiv
& 
\beta \left(1-\dfrac{\big(1+2r\big)p}{n} \right)\,B_1 + \beta \, \dfrac{2rp}{n}\,\big(B_1+I\big) - s^{\mathrm{safe}} \big(2rI\big) - \left(1- D^{\mathrm{safe}} \right) 
\\
&
\hspace*{0.3in}
\geq \beta \, \left(1-\dfrac{(1+2r).p}{n} \right)\,B_1- \left(1- D^{\mathrm{safe}} \right)
\\
[0.1in]
\equiv
& 
\beta \, \dfrac{2rp}{n} \big( B_1+I \big) - s^{\mathrm{safe}} \big( 2rI \big) \geq 0
\,\equiv\,
\beta \, \dfrac{2rp}{n} \big(B_1+I\big) -\left(\dfrac{1-\beta}{n}+ \beta \,\dfrac{p}{n} \right) \big(2rI\big) \geq 0
\\
[0.1in]
\equiv
& 
\dfrac{2\beta rp B_1}{n}+ \dfrac{2\beta rpI}{n}- \dfrac{2rI}{n} + \dfrac{2r\beta I}{n}- \dfrac{2\beta rpI}{n} \geq 0
\,\equiv\,
\beta r p  B_1 - rI +r \beta I \geq 0
\\
[0.1in]
\equiv
& 
\beta r p B_1 \geq r I - r\beta I 
\,\equiv\,
\beta p B_1 \geq I - \beta I
\,\equiv\,
p \geq \dfrac{I \big(1-\beta \big)}{\beta  B_1}
\,\Longrightarrow\,
p^{\mathrm{ind}} = \dfrac{I \big(1-\beta \big)}{\beta  B_1}
\end{array}
\]
\qed
\end{proof}

\begin{lemma}\label{claim8}
$p^{\mathrm{soc}} = \dfrac{2Ir\, \big(1-\beta \big)}{\big(n-1\big) \, \Big(R_H+X-L- B_1 \big(1-\beta \big) \, \Big)}$.
\end{lemma}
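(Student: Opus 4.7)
The plan is to mirror the bookkeeping of Lemma~\ref{claim7}, but replace the \emph{per-bank} payoff comparison with an \emph{aggregate social welfare} comparison between the fully insured regime and the fully uninsured (contagious) regime. The critical $p^{\mathrm{soc}}$ is the probability at which a social planner is indifferent between the two regimes; it is characterized by equating the expected social savings from avoided contagion to the real resource cost of running a system-wide insurance arrangement.

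First I would write down the social value of a non-defaulting bank. Since the banker's rent $B_1$ is partly a transfer and only a fraction contributes to social surplus, the successful bank yields $R_H+X-B_1(1-\beta)$ in social terms, while a liquidated bank yields $L$. Thus the social deadweight loss per default is $R_H+X-L-B_1(1-\beta)$, which is exactly the denominator factor in the claimed formula. Next I would tally the expected number of defaults in a bad state under each regime, invoking Theorem~\ref{main-thm}(c): in the insured regime only the single asset-failing bank defaults, whereas in the uninsured contagious regime (which is the relevant comparison, since otherwise insurance is moot) all $n$ banks liquidate. Hence in a bad state the planner avoids exactly $n-1$ extra liquidations by adopting insurance, giving an aggregate expected benefit of $p(n-1)\big(R_H+X-L-B_1(1-\beta)\big)$.

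Then I would isolate the social cost side. Although the insurance premium $s^{\mathrm{safe}}$ is partly a transfer between banks and nets out, its $(1-\beta)/n$ component represents the opportunity cost of the equity/collateral the insurer must tie up — a genuine deadweight cost (as noted in~\cite{Z11} that collateralizing OTC contracts is a wasteful use of scarce capital). Each bank holds $2rI$ face value of insurance, so the aggregate social cost across the $n$ banks collapses to $2rI(1-\beta)$. Setting aggregate benefit equal to aggregate cost,
\[
p\,(n-1)\,\big(R_H+X-L-B_1(1-\beta)\big) \;=\; 2rI(1-\beta),
\]
and solving for $p$ produces the stated $p^{\mathrm{soc}}$.

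The main obstacle is correctly identifying the $\beta$-related pieces on both sides: on the benefit side, recognizing that $B_1(1-\beta)$ must be netted out of the continuation value because it is a private rent rather than social output; and on the cost side, teasing out the $(1-\beta)/n$ coefficient of $s^{\mathrm{safe}}$ as the only piece that fails to be a pure transfer once payouts are integrated against the default probability $p/n$. Once these two $\beta$-adjustments are pinned down, the algebra is a direct one-line simplification, just as in the proof of Lemma~\ref{claim7}.
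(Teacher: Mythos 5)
Your proposal is correct and arrives at the right formula, but it takes a genuinely different route from the paper. The paper's proof of Lemma~\ref{claim8} is a direct algebraic comparison: it writes down a single bank's expected payoff in the all-insured regime (using the closed forms for $s^{\mathrm{safe}}$ and $D^{\mathrm{safe}}$ from items (II)--(III)) and in the contagious regime (using $D^{\star}$ from item (IV)), subtracts, and grinds through the cancellations until the inequality $p(n-1)\big(\beta B_1 + R_H + X - B_1 - L\big) \geq 2rI(1-\beta)$ falls out. Your welfare-decomposition argument is a reorganization of that same comparison: because investors and the insurance fund both break even in expectation, a single bank's net expected payoff coincides with per-bank social welfare, and your two aggregates are exactly $n$ times the two residual terms that survive the paper's cancellations. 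What your route buys is interpretability — it explains \emph{why} the denominator is the per-default continuation loss $R_H+X-L-B_1(1-\beta)$ (equal to the banker's discounted rent $\beta B_1$ plus the creditor's face claim $R_H+X-B_1$, minus $L$) and why only the $(1-\beta)/n$ piece of $s^{\mathrm{safe}}$ survives as a real resource cost. What it costs is that each of these identifications still needs the same supporting facts the paper uses (the incentive constraint pinning $R_{\imath,1}D_1 = R_H+X-B_1$, the break-even conditions for investors and insurer), so the proof is not shorter, just re-organized.

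Two small cautions. First, your explanation of the $B_1(1-\beta)$ term (``$B_1$ is partly a transfer and only a fraction contributes to social surplus'') misattributes the mechanism: $B_1$ is not lost because it is a transfer (transfers net to zero); it is shaded by $(1-\beta)$ because the banker discounts the continuation rent at $\beta$ while the creditor's claim $R_H+X-B_1$ is not discounted, so a continuing bank is worth $\beta B_1 + (R_H+X-B_1) = R_H+X-B_1(1-\beta)$ to society. The arithmetic you use is right; the verbal gloss is not. Second, for the insurance-cost step to ``tease out'' $(1-\beta)/n$ as the only non-transfer piece, you implicitly need $s^{\mathrm{safe}} = \frac{1-\beta}{n} + \frac{\beta p}{n}$ so that the $\frac{\beta p}{n}$ component cancels against the discounted expected payout $\beta\cdot\frac{p}{n}$ per unit of coverage. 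The derivation in item (II) indeed gives $ns = 1-\beta(1-p) = 1-\beta+\beta p$, i.e.\ a $+$ sign; the displayed formula $s^{\mathrm{safe}} = \frac{1-\beta}{n}-\beta\frac{p}{n}$ in the statement of (II) has the sign flipped, and the Lemma~\ref{claim7} and~\ref{claim8} proofs in fact use the $+$ sign. Your argument silently relies on the corrected sign, which is worth flagging explicitly so the cancellation is seen to be exact rather than approximate.
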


\noindent
({\em Informally, Lemma~\ref{claim8} states that the probability of bad state must be at most 
$p^{\mathrm{soc}} = \frac{2Ir\, \left(1-\beta \right)}{\left(n-1\right) \, \left(R_H+X-L- B_1 \left(1-\beta \right) \, \right)}$
for the social benefits to outweigh the social cost of its counter-party insurance}).

\begin{proof}
\smartqed
The social benefits outweigh the social cost of the counter-party insurance of the system if and only if the following holds:

\begin{longtable}{ll}
 & benefits from all banks being insured $\,\,\,\geq\,\,\,$ benefits from no bank being insured
\\[0.1in]
$\equiv$ & 
$\beta \, \left(1-\dfrac{\big(1+2r\big)p}{n} B_1+\beta \,\dfrac{2rp}{n} B_1+I \right)-s^{\mathrm{safe}} \big(2rI \big)-\left(1-D^{\mathrm{safe}} \right) 
     \,\,\, \geq \,\,\, \beta \big(1-p \big) B_1- \left(1 - D^{\,\star} \right)$
\\[0.1in]
$\equiv$ & 
$\beta B_1-\dfrac{\beta B_1p}{n}-\dfrac{2\beta  B_1p\,r}{n}+\dfrac{2\beta  B_1 p\,r}{n}+\dfrac{2\beta Ip\,r}{n}-\left(\dfrac{1-\beta}{n}+\dfrac{\beta \, p}{n} \right) \big(2rI\big)-1
     +\left(1-\dfrac{p}{n} \right) \, \big(R_H+X - B_1\big)+\dfrac{p\,L}{n}$
\\[0.1in]
& \hspace*{0.5in} $\,\,\, \geq \,\,\, \beta \, B_1 \big(1-p \big)-1 + \big(1-p \big) \, \left(R_H+X - B_1\right) +p\,L$
\\[0.1in]
$\equiv$ & 
$\beta B_1-\dfrac{\beta  B_1p}{n}+\dfrac{2\beta Ip\,r}{n}-\dfrac{2rI}{n}+\dfrac{2\beta rI}{n}-\dfrac{2\beta Ip\,r}{n}-1+R_H+X- B_1-\frac{pR_H}{n}-\dfrac{pX}{n}+\dfrac{p B_1}{n}+\dfrac{p\,L}{n}$
\\
& \hspace*{0.5in} $\,\,\, \geq \,\,\, \beta B_1-p\beta B_1-1 +R_H+X - B_1-pR_H-pX +p B_1 +p \,L$
\\[0.1in]
$\equiv$ & 
$-\dfrac{\beta B_1p}{n}-\dfrac{2rI}{n}+\dfrac{2\beta\, r\,I}{n}-\dfrac{p\,R_H}{n}-\dfrac{p\,X}{n}+\dfrac{p\, B_1}{n}+\dfrac{p\,L}{n} \,\,\, \geq \,\,\, -p\,\beta B_1-p\,R_H-p\,X +p\, B_1 +p\,L$
\\[0.1in]
$\equiv$ & 
$-\dfrac{\beta B_1p}{n}-\dfrac{p\,R_H}{n}-\dfrac{p\,X}{n}+\dfrac{p \,B_1}{n}+\dfrac{p\,L}{n}+p\,\beta B_1+p\,R_H+p\,X -p\, B_1 -p\,L \,\,\, \geq \,\,\, \dfrac{2RI}{n}-\dfrac{2\beta RI}{n}$
\\[0.1in]
$\equiv$ & 
$p \big(-\beta B_1-R_H-X+ B_1+L-n\beta B_1+nR_H+nX-n B_1-nL \big) \,\,\, \geq \,\,\, 2Ir\, \big(1-\beta \big)$
\\[0.1in]
$\equiv$ & 
$p \big(\beta B_1(n-1)+R_H(n-1)+X(n-1)- B_1(n-1)-L(n-1) \big) \,\,\, \geq \,\,\, 2Ir \, \big(1-\beta \big)$
\\[0.1in]
$\equiv$ & 
$p \,\,\, \geq \,\,\, \dfrac{2Ir\,\big(1-\beta\big)}{\big(n-1\big)\,\big(\beta  B_1+R_H+X- B_1-L\big)}
\,\,\,\equiv\,\,\,
p \,\,\, \geq \,\,\, \dfrac{2Ir\, \big(1-\beta \big)}{\big(n-1\big) \, \big(R_H+X-L- B_1(1-\beta ) \, \big)}$
\end{longtable}

\noindent
and thus $p^{\mathrm{soc}} = \dfrac{2Ir\,\big(1-\beta\big)}{\big(n-1\big) \,\big(R_H+X-L- B_1(1-\beta ) \, \big)}$.
\qed
\end{proof}

\begin{corollary}
If each bank has only two counter-parties then $r=1$ and thus 
$p^{\mathrm{soc}} = \dfrac{2I(1-\beta)}{\big(n-1\big)\,\big(R_H+X-L- B_1(1-\beta ) \,\big)}$.
\end{corollary}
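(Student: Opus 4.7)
The plan is to derive this as an immediate specialization of Lemma~\ref{claim8}, since the corollary is nothing more than the $r=1$ instance of that lemma's formula. The model description at the start of Section~3 stipulates that every bank has exactly $2r$ counter-party neighbors, so the hypothesis ``each bank has only two counter-parties'' translates directly into the algebraic constraint $2r = 2$, hence $r = 1$. This is the only nontrivial step of interpretation, and there is essentially no obstacle; the rest is substitution.

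First I would restate the general expression from Lemma~\ref{claim8}, namely
\[
p^{\mathrm{soc}} \;=\; \dfrac{2Ir\,\big(1-\beta\big)}{\big(n-1\big)\,\Big(R_H+X-L- B_1\big(1-\beta\big)\Big)},
\]
which is valid for any positive integer $r$ under the parameter restrictions~\eqref{eq1}--\eqref{eq6} (those restrictions, in particular~\eqref{eq4}, are already formulated in terms of $2r$ and remain meaningful at $r=1$, so no additional consistency check is needed). Then I would observe that the hypothesis of the corollary fixes $r=1$, so the factor $2Ir$ in the numerator collapses to $2I$, while the denominator is independent of $r$ and hence unchanged.

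Carrying out the substitution yields
\[
p^{\mathrm{soc}} \;=\; \dfrac{2I\,\big(1-\beta\big)}{\big(n-1\big)\,\Big(R_H+X-L- B_1\big(1-\beta\big)\Big)},
\]
which matches the claimed expression verbatim. The only thing worth flagging is that this corollary recovers precisely the threshold that appears in Zawadowski's original two-neighbor model~\cite{Z11}, confirming that Lemma~\ref{claim8} is a strict generalization and that the $r$-dependence in our bound scales linearly with the number of counter-party neighbors on each side. Since the proof is a one-line substitution, I anticipate no obstacle beyond making the identification $2r=2 \Longleftrightarrow r=1$ explicit.
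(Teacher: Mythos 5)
Your proposal is correct and takes exactly the route the paper intends: the corollary is stated without proof immediately after Lemma~\ref{claim8} precisely because it is the $r=1$ specialization obtained by identifying ``two counter-parties'' with $2r=2$ and substituting $r=1$ into the lemma's formula, which is what you do.
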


\begin{lemma}[Probability of failure in risky autarky]\label{claim10}~\\
\[
p^{\mathrm{r.aut}} = (1-\beta)\dfrac{n}{n-1} \, X \, \dfrac{\displaystyle 1-\dfrac {2r!}{2^{2r}} \sum_{K=0}^r \dfrac{1}{K! \, (2r-K)!}}{\displaystyle R_H + X - L - B_1( 1- \beta)-\dfrac{(n\beta-1)X}{n-1} \left(1-\dfrac {2r!}{2^{2r}}  \sum_{K=0}^r \dfrac{1}{K! \, (2r-K)!} \right)}
\]
\end{lemma}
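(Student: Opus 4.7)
The plan is to mirror Lemma~\ref{claim7} and the derivation of $p^{\mathrm{f.aut}}$ in item~(VII): $p^{\mathrm{r.aut}}$ is the threshold at which a single bank is indifferent between staying in the safe insured equilibrium and privately deviating to \emph{risky autarky}, i.e., keeping its investment in the real asset but foregoing the OTC contracts that would otherwise cancel out its $\varepsilon_k$ noise. Above $p^{\mathrm{r.aut}}$ the deviation strictly pays, so this value naturally enters the minimum defining $p^{\mathrm{aut}}$ in item~(VIII).

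First I would pin down the combinatorial factor that appears in the statement. With $2r$ independent $\pm u$ shocks, the deviator's net noise is $S = 2u(N-r)$ for $N\sim\mathrm{Binomial}(2r,\nicefrac{1}{2})$; parameter restriction~\eqref{eq4} is calibrated precisely so that the un-hedged bank can still honor its debt when $N\le r$ and cannot when $N>r$. Hence the private ``noise-failure'' probability is
\[
q_r \;=\; 1-\frac{1}{2^{2r}}\sum_{K=0}^{r}\binom{2r}{K} \;=\; 1-\frac{(2r)!}{2^{2r}}\sum_{K=0}^{r}\frac{1}{K!\,(2r-K)!},
\]
which is exactly the bracketed factor appearing in the claimed numerator and denominator.

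Next I would write the no-deviation inequality, modelled on the algebra in Lemma~\ref{claim8}. The left-hand side is the bank's safe-equilibrium payoff $\beta(1-p/n)B_1 - (1-D^{\mathrm{safe}})$, taken from items~(II)--(III). The right-hand side is the deviator's payoff: with probability $1-q_r$ the un-hedged bank survives the noise and earns $\beta B_1$, adjusted by the residual bad-state probability $p$ now spread over the remaining $n-1$ banks (producing the $n/(n-1)$ scaling in the numerator); with probability $q_r$ the noise alone wipes out the non-pledgable payoff $X$ and the bank is liquidated for $L$. The maximum debt the deviator can roll over is obtained by the argument of item~(I), but with $q_r$ now playing the role there played by the system bad-state probability, and tracking how this discounts $X$ on the lender's side is what produces the $(n\beta-1)X/(n-1)$ correction in the denominator. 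Collecting like terms and isolating $p$ then yields the stated closed form. The main obstacle is precisely this last piece of bookkeeping: explaining why the coefficient of $q_r$ in the denominator is the asymmetric $(n\beta-1)/(n-1)$ rather than a superficially natural $\beta/n$. This asymmetry arises because the deviator's noise is idiosyncratic to that single bank, so the resulting loss of $X$ must be apportioned against the $n-1$ still-insured banks rather than spread symmetrically across all $n$. As a sanity check I would verify that setting $r=1$ (so that $q_1=\nicefrac{1}{2}-\binom{2}{1}/2^{3}=\nicefrac{1}{4}$) recovers the two-counter-party expression obtained by Zawadowski in~\cite{Z11}.
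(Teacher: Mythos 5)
There is a genuine gap: you have the wrong comparison baseline. The paper defines $p^{\mathrm{r.aut}}$ by asking whether a bank in the \emph{contagious, uninsured} equilibrium (the one described in item~(IV), where $D^{\,\star}$ is borrowed and the whole system fails with probability $p$) wants to deviate to risky autarky, so the left-hand side of the no-deviation inequality is $\beta(1-p)B_1 - (1-D^{\,\star})$. You instead propose to compare against the \emph{safe insured} equilibrium $\beta(1-p/n)B_1 - (1-D^{\mathrm{safe}})$. These two payoffs differ by a factor of roughly $n$ in the terms multiplying $p$ (e.g.\ $pL$ vs.\ $pL/n$), so your inequality would produce a different threshold than the one stated. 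This matters conceptually too: $p^{\mathrm{aut}} = \min\{p^{\mathrm{s.aut}},p^{\mathrm{r.aut}},p^{\mathrm{f.aut}}\}$ is precisely the cap under which no bank leaves the contagious equilibrium to self-insure by holding more equity, and the paper's companion derivations in items~(VI)--(VII) and Lemma~\ref{claim11} all use the same ``payoff in contagious system $\geq$ payoff in autarky'' comparison.

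Your characterization of the deviator's payoff is also off. In the paper the binomial survival probability enters solely through the \emph{incentive constraint} that bounds $R^{\mathrm{r.aut}}_{\ensuremath{\imath},0} D^{\mathrm{r.aut}}$ from above by $R_H - B_1 + X\bigl(\tfrac{2r!}{2^{2r}}\sum_{K=0}^{r}\tfrac{1}{K!\,(2r-K)!}\bigr)$ (because $X$ is collected only on survival), combined with the lenders' break-even condition $D^{\mathrm{r.aut}} = \tfrac{p}{n}L + (1-\tfrac{p}{n})R^{\mathrm{r.aut}}_{\ensuremath{\imath},0}D^{\mathrm{r.aut}}$. There is no separate ``liquidated for $L$ with probability $q_r$'' branch in the deviator's own payoff; by construction the risky-autarky bank can always roll its debt over at $t=1$, and $L$ enters only through the lenders' break-even term. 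Likewise the $n/(n-1)$ and $(n\beta-1)/(n-1)$ factors do not come from spreading bad-state risk over $n-1$ competitors; they fall out of the algebra once the term $\beta\tfrac{n-1}{n}p B_1$ (the deviator's payoff when \emph{another} bank is hit by the bad state) is collected with the remaining terms and $p$ is isolated. You did correctly identify the binomial factor $1-\tfrac{2r!}{2^{2r}}\sum_{K=0}^{r}\tfrac{1}{K!\,(2r-K)!}$ and the overall ``threshold from an indifference condition'' structure, but the two incorrect ingredients above would give the wrong closed form.
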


\noindent
({\em An informal explanation of Lemma~\ref{claim10} is as follows.
If a bank has enough equity to survive even if all of its counter-parties collapse then the bank may be forced to borrow less and this is not profitable. 
But, if the bank decides to deviate to risky autarky then it cannot pay back its debt if the real project delivers $R_H - 2ru$. 
The probability of failure if the bank chooses risky autarky is given by 
$p^{\mathrm{r.aut}} = (1-\beta)\frac{n}{n-1} \, X \, \frac{1-\frac {2r!}{2^{2r}} \sum_{K=0}^r \frac{1}{K! \, (2r-K)!}}{R_H + X - L - B_1( 1- \beta)-\frac{(n\beta-1)X}{n-1} \left(1-\frac {2r!}{2^{2r}}  \sum_{K=0}^r \frac{1}{K! \, (2r-K)!} \right)}$}).

\begin{proof}
\smartqed
For a bank to be in risky autarky, they borrow less such that it can roll over the debt at $t=1$ if its neighbors collapse but need not survive at $t=2$.
The relevant incentive constraint is: 
\begin{multline*}
R_H + \dfrac {2r!\,X}{2^{2\,r}} \sum_{K=0}^r \dfrac{1}{K! \, (2r-K)!}- {R^{\mathrm{r.aut}}_{\ensuremath{\imath},0}} D^{\mathrm{r.aut}} \geq B_1
\\
\equiv\,\,
R^{\mathrm{r.aut}}_{\ensuremath{\imath},0} D^{\mathrm{r.aut}} \leq R_H- B_1+ \dfrac {2r! \, X}{2^{2\,r}}  \sum_{K=0}^r \dfrac{1}{K! \, (2r-K)!}
\end{multline*}
The break-even condition for investors at $t=0$ is:
\begin{multline*}
D^{\mathrm{r.aut}} = \dfrac {p}{n} L +\left(1- \dfrac{p}{n} \right) R^{\mathrm{r.aut}}_{\ensuremath{\imath},0} D^{\mathrm{r.aut}}
\\
\equiv\,\,
D^{\mathrm{r.aut}} = \dfrac{p}{n} L +R_H- B_1 - \dfrac{p}{n} R_H + \dfrac{p}{n} B_1+ \left(1- \dfrac{p}{n} \right) \dfrac {2r!\,X}{2^{2r}}  \sum_{K=0}^r \dfrac{1}{K!\,(2r-K)!}
\end{multline*}
A bank decides not to deviate to risky autarky in a contagious system if and only if the following holds:

\begin{longtable}{ll}
& 
payoff in contagious system $\,\, \geq \,\,$ payoff in risky autarky
\\[0.1in]
$\equiv$ & 
$\beta \, (1-p) B_1 - \left(1-D^{\,\star}\right) \,\, \geq \,\, \beta \, (1-p) \left(R_H + X - R^{\mathrm{\,r.aut}}_{\ensuremath{\imath},0}\, D^{\mathrm{\,r.aut}} \right) 
          + \beta \, \dfrac{n-1}{n} p \, B_1 - \left(1 - D^{\mathrm{r.aut}}\right)$
\\[0.1in]
$\equiv$ & 
$\beta B_1- p \, \beta B_1-1+R_H+X-B_1-p \, R_H-p \, X+p \, B_1$
\\
& 
\hspace*{0.3in}
$\geq \,\,\big(\beta-p  \, \beta \big) \, \left( R_H + X-R_H +B_1 -  \dfrac {2r! \, X}{2^{2r}}  \sum_{K=0}^r \dfrac{1}{K! \, (2r-K)!}\right) + \beta \, \dfrac{n-1}{n} p \, B_1 - 1+\dfrac {p}{n}\,L$
\\
& 
\hspace*{0.5in}
$+\, R_H-B_1-\dfrac{p}{n}R_H+\dfrac{p}{n}B_1+\left(1-\dfrac{p}{n}\right)\dfrac {2r! \, X}{2^{2r}}  \sum_{K=0}^r \dfrac{1}{K! \, (2r-K)!}$
\\[0.1in]
$\equiv$ & 
$\dfrac{n}{n-1}(1-\beta)X \left( 1-\dfrac {2r!}{2^{2r}}  \sum_{K=0}^r \dfrac{1}{K! \, (2r-K)!} \right)$
\\
&
\hspace*{0.3in}
$\geq \,\, p \left[ R_H+X-L-B_1(1-\beta)-\dfrac{(n\beta-1)X}{n-1} \left(1-\dfrac {2r!}{2^{2r}}  \sum_{K=0}^r \dfrac{1}{K! \, (2r-K)!} \right) \, \right]$
\end{longtable}

\noindent
which implies 
\[
p^{\mathrm{r.aut}} = (1-\beta)\dfrac{n}{n-1} \, X \, \dfrac{\displaystyle 1-\dfrac {2r!}{2^{2r}} \sum_{K=0}^r \dfrac{1}{K! \, (2r-K)!}}{\displaystyle R_H + X - L - B_1( 1- \beta)-\dfrac{(n\beta-1)X}{n-1} \left(1-\dfrac {2r!}{2^{2r}}  \sum_{K=0}^r \dfrac{1}{K! \, (2r-K)!} \right)}
\]
\qed
\end{proof}

\begin{corollary}
If the number of neighbors is $2$ then $r=1$ and thus

\begin{longtable}{ll}
&
$p^{\mathrm{r.aut}} = (1-\beta) \dfrac{n}{n-1} \, X \, \dfrac{1-\dfrac{2}{4} \left( \dfrac{1}{2}+\dfrac{1}{1} \right) }{R_H + X - L - B_1 ( 1- \beta)-\dfrac{n\beta - 1}{n-1} \,X \, 
      \left( 1-\dfrac{2}{4} \left( \dfrac{1}{2}+\dfrac{1}{1} \right) \, \right)}$
\\ [0.1in]
$\equiv$ &
$p^{\mathrm{r.aut}} = (1-\beta) \dfrac{n}{n-1} \, X \, \dfrac{1-\dfrac{3}{4}}{R_H + X - L - B_1( 1- \beta)-\dfrac{n\beta - 1}{n-1} \, X \, \left(1-\frac{3}{4} \right)}$
\\ [0.1in]
$\equiv$ &
$p^{\mathrm{r.aut}} = (1-\beta) \dfrac{n}{n-1} \dfrac{\dfrac{X}{4}}{R_H + X - L - B_1( 1- \beta)- \left(\dfrac{n\beta - 1}{n-1} \right) \, \left(\dfrac{X}{4} \right)}$
\end{longtable}
\end{corollary}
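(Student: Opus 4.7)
The plan is to specialize the general formula for $p^{\mathrm{r.aut}}$ from Lemma~\ref{claim10} to the case $r=1$ and simplify the binomial-style sum that appears in both the numerator and the denominator of the fraction.

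First I would evaluate $\frac{(2r)!}{2^{2r}}\sum_{K=0}^{r}\frac{1}{K!\,(2r-K)!}$ at $r=1$. The sum has only two terms, $\frac{1}{0!\,2!}=\frac{1}{2}$ and $\frac{1}{1!\,1!}=1$, and the prefactor is $\frac{2!}{2^{2}}=\frac{2}{4}$, so the whole expression reduces to $\frac{2}{4}\cdot\frac{3}{2}=\frac{3}{4}$, and hence $1-\frac{3}{4}=\frac{1}{4}$.

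Next I would substitute this value in both places where it appears in Lemma~\ref{claim10}. In the numerator of the large fraction, the factor $X\bigl(1-\tfrac{3}{4}\bigr)$ collapses to $\frac{X}{4}$. In the denominator, the analogous factor multiplies $\frac{n\beta-1}{n-1}\,X$ and therefore becomes $\bigl(\frac{n\beta-1}{n-1}\bigr)\bigl(\frac{X}{4}\bigr)$. The three displayed equivalences in the corollary statement simply record this substitution at three levels of simplification: the first line is the raw substitution $r=1$; the second line performs $\frac{2}{4}\bigl(\frac{1}{2}+\frac{1}{1}\bigr)=\frac{3}{4}$; the third line collects the common $\frac{X}{4}$ into a single factor, yielding the claimed identity.

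The main obstacle is essentially nonexistent: this is a purely mechanical specialization of the formula established in Lemma~\ref{claim10}, and each intermediate line of the corollary corresponds to one of the two elementary reductions above. No auxiliary lemma or additional assumption beyond those already invoked in Lemma~\ref{claim10} is needed.
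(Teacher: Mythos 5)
Your computation is correct and matches the paper's approach exactly: the corollary's three displayed lines simply record the mechanical substitution $r=1$ into Lemma~\ref{claim10} followed by the arithmetic reduction $\frac{2}{4}\bigl(\frac{1}{2}+1\bigr)=\frac{3}{4}$ and then $X\bigl(1-\frac{3}{4}\bigr)=\frac{X}{4}$, which is precisely what you describe. No gap.
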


\begin{lemma}[Probability of failure in safe autarky]\label{claim11}~\\
\[
p^{\mathrm{s.aut}} = (1-\beta )\, \left(\dfrac{n}{n-1} \right) \, \dfrac{2ru+X - B_1 }{  R_H+X- L - B_1+ \beta B_1 + \dfrac{1 - \beta}{n-1} \big(X+ 2 r u-B_1 \big) }
\]
\end{lemma}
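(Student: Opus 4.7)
\noindent
\textbf{Proof plan for Lemma~\ref{claim11}.} The proof will follow the same three-step template as Lemma~\ref{claim10}: (i) write the bank's incentive constraint in safe autarky, (ii) write the investors' break-even condition at $t=0$, and (iii) require that the bank (weakly) prefers the contagious equilibrium to deviating into safe autarky, then solve the resulting inequality for $p$.

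\medskip

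\noindent
The distinguishing feature of safe autarky, as compared to risky autarky, is that the bank borrows so conservatively that it survives at $t=2$ \emph{even} in the worst-case scenario where all $2r$ of its counter-parties default and all realized asset shocks on the hedging contracts are adverse, giving a total counter-party loss of $2ru$. Thus the relevant incentive constraint replaces the expectation-based term $\frac{2r!\, X}{2^{2r}} \sum_{K=0}^{r} \frac{1}{K!\,(2r-K)!}$ appearing in Lemma~\ref{claim10} by the worst-case quantity $-2ru$, yielding
\[
R_H - 2ru + X - R^{\mathrm{s.aut}}_{\ensuremath{\imath},0}\,D^{\mathrm{s.aut}} \;\geq\; B_1,
\]
which, at maximum borrowing, pins down $R^{\mathrm{s.aut}}_{\ensuremath{\imath},0} D^{\mathrm{s.aut}} = R_H + X - B_1 - 2ru$. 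Since a safe-autarkic bank defaults only when its \emph{own} project is the one drawn to fail in the bad state (probability $p/n$), the investors' break-even condition at $t=0$ reads
\[
D^{\mathrm{s.aut}} \;=\; \dfrac{p}{n}\,L \;+\; \left(1 - \dfrac{p}{n}\right)\bigl(R_H + X - B_1 - 2ru\bigr).
\]

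\medskip

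\noindent
For step (iii), I would compare the equity holder's payoff in the contagious equilibrium, namely $\beta(1-p)\,B_1 - (1 - D^{\,\star})$ with $D^{\,\star}$ as in item (IV), against the equity holder's payoff in safe autarky. Because the safe-autarkic bank survives both $t=1$ and $t=2$ unless its own project fails, that payoff is
\[
\beta\left(1-\dfrac{p}{n}\right)\!\bigl(R_H + X - R^{\mathrm{s.aut}}_{\ensuremath{\imath},0} D^{\mathrm{s.aut}}\bigr) \;-\; \bigl(1 - D^{\mathrm{s.aut}}\bigr).
\]
Substituting both $R^{\mathrm{s.aut}}_{\ensuremath{\imath},0} D^{\mathrm{s.aut}}$ and the explicit forms of $D^{\,\star}$ and $D^{\mathrm{s.aut}}$ into the inequality ``contagious $\geq$ safe autarky'' and collecting the terms linear in $p$ on one side, the constant terms on the other, gives a bound of the shape $p \cdot (\text{denominator}) \geq (1-\beta)\,\frac{n}{n-1}\,(2ru + X - B_1)$, and reading off the threshold yields the stated expression for $p^{\mathrm{s.aut}}$.

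\medskip

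\noindent
The principal obstacle is bookkeeping rather than conceptual: the denominator in the target expression blends a term $R_H + X - L - B_1(1-\beta)$ (which comes from equating the two payoff-expressions) with a $\frac{1-\beta}{n-1}(X + 2ru - B_1)$ correction (which is exactly the asymmetry between the factors $1-p$ and $1-p/n$ in the two payoffs, analogous to how the $\frac{n\beta - 1}{n-1}$ factor arose in Lemma~\ref{claim10}). Care must be taken to use $R^{\mathrm{s.aut}}_{\ensuremath{\imath},0} D^{\mathrm{s.aut}} = R_H + X - B_1 - 2ru$ consistently in both the investors' condition and in the equity payoff, so that the $R_H$ and $X$ terms combine correctly and only the excess-equity quantity $2ru + X - B_1$ appears in the numerator.
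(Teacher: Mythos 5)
Your template (incentive/solvency constraint $\rightarrow$ break-even condition $\rightarrow$ compare payoffs, then solve for $p$) matches the paper's structure, and your break-even condition and payoff-comparison inequality are both the right ones. The error is in step (i): you pin down $R^{\mathrm{s.aut}}_{\ensuremath{\imath},0} D^{\mathrm{s.aut}}$ with a worst-case version of the \emph{incentive} constraint, $R_H - 2ru + X - R^{\mathrm{s.aut}}_{\ensuremath{\imath},0}D^{\mathrm{s.aut}} \geq B_1$, giving $R^{\mathrm{s.aut}}_{\ensuremath{\imath},0}D^{\mathrm{s.aut}} = R_H + X - B_1 - 2ru$. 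But safe autarky is defined by a worst-case \emph{solvency} constraint: the bank must be able to repay its debt even when the real project delivers $R_H - 2ru$, and the non-pledgable payoff $X$ cannot be used to service debt. The correct constraint is therefore $R_H - 2ru - R^{\mathrm{s.aut}}_{\ensuremath{\imath},0}D^{\mathrm{s.aut}} \geq 0$, i.e.\ $R^{\mathrm{s.aut}}_{\ensuremath{\imath},0}D^{\mathrm{s.aut}} = R_H - 2ru$. (The incentive constraint does hold, but it is slack here: it uses the \emph{expected} continuation payoff, not the worst-case payoff, and since $2ru + X - B_1 > 0$ it is implied by solvency.) Your version is off by the additive term $X - B_1$, which does not cancel.

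That discrepancy propagates. With your value, $R_H + X - R^{\mathrm{s.aut}}_{\ensuremath{\imath},0}D^{\mathrm{s.aut}} = B_1 + 2ru$ rather than $X + 2ru$, and $D^{\mathrm{s.aut}} = (1-\tfrac{p}{n})(R_H + X - B_1 - 2ru) + \tfrac{p}{n}L$ rather than $(1-\tfrac{p}{n})(R_H - 2ru) + \tfrac{p}{n}L$. Pushing this through the same ``contagious $\geq$ safe autarky'' inequality collapses the left side to $(1-\beta)\,2ru$ instead of $(1-\beta)(2ru + X - B_1)$, and yields
\[
p \;\leq\; (1-\beta)\,\frac{n}{n-1}\cdot \frac{2ru}{\,R_H + X - L - B_1(1-\beta) + \dfrac{(1-\beta)\,2ru}{n-1}\,},
\]
which differs from the stated $p^{\mathrm{s.aut}}$ both in the numerator ($2ru$ versus $2ru + X - B_1$) and in the $\tfrac{1}{n-1}$ correction in the denominator. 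So the hope expressed at the end of your proposal, that ``the $R_H$ and $X$ terms combine correctly and only the excess-equity quantity $2ru + X - B_1$ appears in the numerator,'' is not realized under your constraint. Replacing your step (i) with the solvency constraint $R^{\mathrm{s.aut}}_{\ensuremath{\imath},0}D^{\mathrm{s.aut}} = R_H - 2ru$ and keeping the rest of your plan verbatim gives the paper's derivation and the stated formula.
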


\noindent
({\em An informal explanation of Lemma~\ref{claim11} is as follows. If a bank chooses to deviate to safe autarky, it will survive unless it is 
directly affected by low return $R_L$. It can pay back its debt even if the real project delivers $R_H - 2ru$.  
The probability of failure if the bank chooses safe autarky is given by 
$p^{\mathrm{s.aut}} = (1-\beta )\, \left(\frac{n}{n-1} \right) \, \frac{2ru+X - B_1 }{  R_H+X- L - B_1+ \beta B_1 + \frac{1 - \beta}{n-1} \big(X+ 2 r u-B_1 \big) }$}).

\begin{proof}
\smartqed
Suppose that a bank survives at $t=2$ even if all of its counter-party risks are realized to ensure that the payoff of the real project is $R_H - 2ru$. 
Since the bank needs enough equity to survive even if real project yields $R_H - 2ru$, we have 
\[
R_H-2ru-R^{\mathrm{s.aut}}_{\ensuremath{\imath},0} D^{\mathrm{s.aut}}\geq 0 
\,\,\equiv\,\,
R^{\mathrm{s.aut}}_{\ensuremath{\imath},0} D^{\mathrm{s.aut}}\leq R_H-2ru
\]
The break-even condition for investors at $t=0$ is:
\[
D^{\mathrm{s.aut}} = \left(1-\dfrac{p}{n}\right) \, R^{\mathrm{s.aut}}_{\ensuremath{\imath},0} D^{\mathrm{s.aut}}+\dfrac{p}{n}L 
\,\,\equiv\,\,
D^{\mathrm{s.aut}} = \left(1-\dfrac{p}{n}\right) \, \big(R_H-2ru \big) +\dfrac{p}{n}L
\]
Banks do not deviate to safe autarky from contagious system if and only if the following holds:

\begin{longtable}{ll}
& payoff in contagious system $\,\, \geq  \,\,$ payoff in safe autarky
\\[0.1in]
$\equiv$ & 
$\beta \big(1-p\big)B_1 - \left(1-D^{\,\star}\right) \,\, \geq \,\, \beta \left(1-\dfrac{p}{n} \right) \, \left(R_H+X-R^{\mathrm{s.aut}}_{\ensuremath{\imath},0} D^{\mathrm{s.aut}} \right)
              - \left(1-D^{\mathrm{s.aut}}\right)$
\\[0.1in]
$\equiv$ & 
$\beta B_1 - p \beta B_1 - 1 + R_H+X-B_1-p \, R_H-p \, X+p \, B_1+p \, L$ 
\\[0.1in]
& \hspace*{0.5in} $\geq \,\,
\left(\beta - \dfrac{p\beta}{n} \right) \, \big(R_H+X-R_H+2ru \big)-1+R_H-2ru-\dfrac{p}{n}R_H+\dfrac{p}{n}\,\big(2ru\big)+\dfrac{p}{n}\,L$
\\[0.1in]
$\equiv$ & 
$\beta B_1 - p \beta B_1+X-B_1-pR_H-p\,X+p\,B_1+p\,L$
\\
& \hspace*{0.5in} $\geq \,\,
\beta R_H+\beta X-\beta R_H+2 \beta ru - \dfrac{p\beta R_H}{n}-\dfrac{p\beta X}{n}+\dfrac{p\beta R_H}{n}-\dfrac{2p\beta ru}{n}-2ru-\dfrac{p}{n}R_H+\dfrac{p}{n} \big(2ru\big) +\dfrac{p}{n}\,L$
\\[0.1in]
$\equiv$ & 
$\beta B_1 +X - B_1 - \beta X-2 \beta r u + 2 r u  \,\, \geq \,\, \dfrac{p}{n} \, \Bigg( n \beta B_1 + n R_H+nX-n B_1 - n L - \beta X - 2 r \beta u - R_H + 2 r u +L \Bigg)$
\\[0.1in]
$\equiv$ & 
$(1-\beta )X -(1- \beta ) B_1 +2 r u(1- \beta )$ 
\\
& \hspace {0.2in} $\geq\,\,
\dfrac{p}{n} \, \Bigg( n \beta B_1 +  R_H(n-1)+nX-n B_1 - L(n-1) - \beta X + 2 r u(1- \beta )+X-X+B_1-B_1+\beta B_1 - \beta B_1 \Bigg)$
\\[0.1in]
$\equiv$ & 
$(1-\beta )\,n \,\big(2ru+X - B_1 \big)$
\\
& \hspace {0.2in} $\geq \,\,
p \, \Bigg( \beta B_1(n-1) +  R_H(n-1)+X(n-1)- B_1(n-1) - L(n-1) + X(1 - \beta) + 2 r u(1- \beta )-B_1(1-  \beta) \, \Bigg)$
\\[0.1in]
$\equiv$ & 
$(1-\beta )\dfrac{n}{n-1}\,(2Ru+X - B_1 ) \,\,\geq \,\,p \, \Bigg( \beta B_1 +  R_H+X- B_1 - L + \dfrac{1 - \beta}{n-1}(X+ 2 r u-B_1) \, \Bigg)$
\\[0.1in]
$\equiv$ & 
$p \,\,\leq \,\,(1-\beta )\,\left(\dfrac{n}{n-1} \right) \, \dfrac{2ru+X - B_1 }{  R_H+X- L - B_1+ \beta B_1 + \dfrac{1 - \beta}{n-1} \big(X+ 2 r u-B_1 \big)}$
\end{longtable}

\noindent
This implies 
$p^{\mathrm{s.aut}} = (1-\beta ) \, \left(\dfrac{n}{n-1} \right) \, \dfrac{2ru+X - B_1 }{  R_H+X- L - B_1+ \beta B_1 + \dfrac{1 - \beta}{n-1} \big(X+ 2 r u-B_1 \big)}$.
\qed
\end{proof}

\begin{corollary}
If the number of neighbors is $2$ then $r=1$ and thus 
\[
p^{\mathrm{s.aut}} = \big(1-\beta \big) \, \left(\dfrac{n}{n-1} \right) \, \dfrac{2u+X - B_1 }{  R_H+X- L - B_1+ \beta B_1 + \dfrac{1 - \beta}{n-1} \big(X+ 2 u-B_1 \big)}
\]
\end{corollary}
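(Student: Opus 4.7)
The statement is an immediate specialization of Lemma~\ref{claim11} to the case $r=1$, so my proof plan is essentially to record the substitution and verify that nothing else in the expression depends on $r$ in a hidden way. I would begin by rewriting the general formula
\[
p^{\mathrm{s.aut}} = (1-\beta)\,\left(\dfrac{n}{n-1}\right)\,\dfrac{2ru+X-B_1}{R_H+X-L-B_1+\beta B_1+\dfrac{1-\beta}{n-1}\big(X+2ru-B_1\big)}
\]
and then plug in $r=1$ termwise. The only two places where $r$ appears are the two occurrences of $2ru$ in the numerator and in the last term of the denominator; both collapse to $2u$, leaving every other constant ($\beta$, $n$, $X$, $B_1$, $R_H$, $L$) untouched.

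After substitution, the numerator of the fraction becomes $2u+X-B_1$ and the denominator becomes $R_H+X-L-B_1+\beta B_1 + \dfrac{1-\beta}{n-1}\big(X+2u-B_1\big)$, which matches the claimed expression exactly. The only thing worth double-checking is that the derivation in Lemma~\ref{claim11} did not implicitly exploit $r\geq 2$ (for instance, through the incentive constraint that assumes the bank can survive with $2r$ failed counter-parties); inspecting that proof, the constraint $R_H-2ru-R^{\mathrm{s.aut}}_{\imath,0}D^{\mathrm{s.aut}}\geq 0$ and the break-even condition $D^{\mathrm{s.aut}}=(1-p/n)(R_H-2ru)+(p/n)L$ both remain valid when $r=1$, so no additional hypothesis is needed.

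The proof is therefore purely mechanical substitution, and there is no real obstacle to overcome. If anything, the only pitfall to watch for is a transcription error when collapsing $2ru\mapsto 2u$ in both occurrences; I would write out both instances explicitly to avoid dropping one. Since the statement is a genuine corollary (not a proposition requiring a fresh argument), the proof can be stated in one or two lines by invoking Lemma~\ref{claim11} with $r=1$ and displaying the resulting formula.
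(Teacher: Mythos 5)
Your proof is correct and is exactly the argument the paper intends: the corollary is just Lemma~\ref{claim11} with $r=1$ substituted in, and your check that the incentive constraint and break-even condition impose no hidden lower bound on $r$ is a sensible (if unstated in the paper) sanity check.
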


\begin{lemma}\label{claim6}
In equilibrium banks hedge all of its counter-party risks i.e.  banks endogenously enter into OTC contracts.
\end{lemma}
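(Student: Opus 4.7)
The plan is to show, by comparing the payoff of a bank that fully enters into the OTC hedging contracts with its $2r$ counter-parties against the payoff of a bank that deviates (partially or fully), that the hedging strategy strictly dominates. I would work in the symmetric regime in which all other banks hedge, and show that a unilateral deviation is not profitable; combined with a best-response argument this gives the equilibrium claim.

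First I would write down the return of a bank that hedges with all $2r$ neighbors: the OTC contracts cancel the $+\!\!\sum_{k=|i-1|}^{|i-r|}\varepsilon_k$ and $-\!\!\sum_{k=|i-r-1|}^{|i-2r|}\varepsilon_k$ terms bilaterally, so conditional on no counter-party default the bank's asset payoff at $t=2$ is exactly $R\in\{R_H,R_L\}$. In this configuration item (I) applies, giving a maximum pledgeable debt of $D^{\max}(R_{\ensuremath{\imath},0})=(R_H-B_1+X)/R_{\ensuremath{\imath},0}$ and an expected payoff of $B_1$ to the bank. Next I would analyze a deviator that skips the OTC contract on some subset of its $2r$ counter-parties: its asset payoff then carries residual $\pm u$ shocks from each un-hedged edge. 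By inequality (\ref{eq4}), even a single un-hedged shock satisfies $u>B_1-X$, so in the downside realization the project yields strictly less than $R_H-B_1+X$ and the investors' roll-over incentive constraint from item (I) is violated at $R_{\ensuremath{\imath},0}D^{\max}$. Consequently the deviator must borrow strictly less, which strictly lowers its expected payoff below $B_1$.

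I would then rule out the temptation to shirk (take private benefit $B_0$ at $t=0$) in the un-hedged deviation, using inequality (\ref{eq6}), which gives $2u\le B_0<(1-p)B_1$: the first part means hedging gains cover the benefit from the un-hedged asset shocks, and the second part means the bank's expected reward $B_1$ from behaving dominates $B_0$. Together these eliminate both ``do not hedge and behave'' and ``do not hedge and shirk'' as profitable deviations. Finally, by the symmetry of the ring topology (each bank faces the same $2r$ neighbors and the same incentive arithmetic), the unique symmetric equilibrium has every bank entering OTC contracts on all $2r$ of its counter-party edges.

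The main obstacle I anticipate is handling \emph{partial} deviations cleanly: one must argue that dropping any single OTC edge is already strictly unprofitable, rather than only ruling out the all-or-nothing deviation. I would handle this by noting that each un-hedged edge contributes an independent $\pm u$ shock and (\ref{eq4}) is stated per-shock, so the argument for ``one un-hedged edge is unprofitable'' is the tightest case and implies the general one. A secondary subtlety is that once every other bank is hedging, the deviating bank's contracts are actually still accepted by its neighbors at equilibrium prices, so the comparison is between (i) accepting the standard OTC contracts at the symmetric price and (ii) unilaterally refusing them; this is the relevant single-bank deviation, and the computation above shows (i) strictly dominates under (\ref{eq4})--(\ref{eq6}).
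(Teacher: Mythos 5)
Your plan is in the right spirit — compare the payoff from hedging against the payoff from not hedging and invoke the parameter restrictions — but it misses the central computation in the paper's proof and misreads what inequality~\eqref{eq4} actually delivers.

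The paper's proof is a direct expected-payoff comparison. If a bank does not hedge, its $2r$ unhedged $\pm u$ shocks produce a binomially distributed net realization; the bank survives only when the number of $-u$ realizations $K$ satisfies $K\le r$ (this threshold is exactly where the \emph{lower} bound $u>B_1-X$ of~\eqref{eq4} enters: a single net $-u$ is already fatal), and conditional on surviving it collects $B_1+(2r-2K)u$. Summing $\binom{2r}{K}/2^{2r}\cdot\bigl(B_1+(2r-2K)u\bigr)$ over $K=0,\dots,r$ and requiring this expectation to fall below $B_1$ reduces, after cancellation, precisely to the \emph{upper} bound of~\eqref{eq4}. That binomial calculation is the entire content of the lemma, and it is absent from your proposal. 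Your substitute argument --- that the deviator ``must borrow strictly less'' because a single $-u$ realization violates the roll-over constraint --- is incomplete: it ignores the upside realizations, where the unhedged bank collects strictly \emph{more} than $B_1$, so without the explicit weighting one cannot conclude that the expected payoff drops. You are essentially comparing a certain $B_1$ against a mean-preserving spread truncated below, and whether that truncated mean falls short of $B_1$ is exactly what the upper bound of~\eqref{eq4} encodes.

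Two further specific problems. First, you claim that ``\eqref{eq4} is stated per-shock,'' and lean on this to argue that dropping a single edge is the tightest case. But only the lower bound $u>B_1-X$ is per-shock; the upper bound is a global binomial expression over all $2r$ shocks, so the monotonicity-of-partial-deviations argument does not follow from it. Second, your use of~\eqref{eq6} to rule out shirking is not part of the paper's argument for this lemma (the paper uses~\eqref{eq6} elsewhere), and the readings you offer of its two halves (``hedging gains cover the benefit from un-hedged shocks''; ``$B_1$ dominates $B_0$'') are not obviously equivalent to the inequality you would need here. As written, the proposal would not close; you need to set up and evaluate the binomial expectation and connect it to the upper bound of~\eqref{eq4}, and separately verify $u>B_1-X$ as the bankruptcy threshold, which is what the paper does.
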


\begin{proof}
\smartqed
The banks hedge all of its counter-party risks if and only if
\[
\begin{array}{ll}
& \text{payoff from hedging in contagious equilibrium } > \text{ payoff from not hedging}
\\ [0.1in]
\equiv &
\beta\,\big(1-p\big) B_1 
\\
&
\hspace*{0.3in}
> \, \beta\,\Big(1-p\Big) \dfrac{2r!}{2^{2r}} \, \left(\dfrac{B_1 + 2ru}{0!\, (2r-0)!}+\dfrac{ B_1 + (2r-2)u}{1! \, (2r-1)!}+\dfrac{ B_1 + (2r-4)u}{2! \, (2r-2)!}+\dots+\dfrac{ B_1 + (2r - 2r)u}{\dfrac{2r}{2}! \left(2r-\dfrac{2r}{2} \right)!} \right)
\\ [0.1in]
\equiv &
B_1 > \dfrac{2r!}{2^{2r}}\left(\dfrac{ B_1 + 2ru}{0! \, (2r-0)!}+\dfrac{ B_1 + (2r - 2)u}{1! \, (2r-1)!}+\dfrac{ B_1 + (2r - 4)u}{2! \, (2r-2)!}+\dots+\dfrac{ B_1 + (2r - 2r)u}{\dfrac{2r}{2}! \, \left(2r-\dfrac{2r}{2} \right)!} \right)
\\ [0.1in]
\equiv &
B_1>\dfrac{\displaystyle 2 \, u \, r\sum_{K=0}^r\dfrac{2r-2K}{K! \, (2r-K)!}}{\displaystyle 2^{2r}-\sum_{K=0}^r \dfrac{2r!}{K!\, (2r-K)!}}
\end{array}
\]
which satisfies inequality~\eqref{eq4}, and thus banks hedge all of its counter-party risks.

Let the returns from the successful project be $\displaystyle R_H \, + \!\!\! \sum_{k\,=\,|\,i-1\,|}^{|\,i-r\,|} \!\!\! \varepsilon_k \,\,  - \!\!\!\! \sum_{k\,=\,|\,i-(r+1)\,|}^{|\,i-2r\,|} \!\!\! \!\!\! \!\!\! \varepsilon_k$. 
Assume that the bank goes bankrupt at $t=2$ if the counter-party realization of its unhedged risk is 
$-u$. This is true if the bank cannot repay its debt at $t=2$, \IE, 
\[
R_H-u < R^{\,\star}_{\ensuremath{\imath},0}\,D^{\,\star}
\,\equiv\,
R_H-u < R_H- B_1+X
\,\equiv\,
u > B_1-X
\]
if a bank fails when it loses $-u$ on its unhedged counter-party exposure, it will fail when the loss is greater than $-u$ on its unhedged counter-party exposure.
\qed
\end{proof}

\begin{corollary}
If number of counter-party neighbors is $2$ ({\em\IE}, $r=1$), then 
\\
$\displaystyle B_1 > \dfrac{2u \left(\dfrac{2-0}{0!\,(2-0)!}+\dfrac{2-2}{1!\,(2-1)!} \right)}{2^2-\left(\dfrac{2!}{0!\,(2-0)!}+\dfrac{2!}{1!\,(2-1)!}\right)}
\,\equiv\,
B_1 >2u$
\hspace*{0.2in}and\hspace*{0.2in}
$u > B_1-X$.
\end{corollary}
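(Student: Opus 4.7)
The plan is to obtain the corollary as a direct specialization of Lemma~\ref{claim6} to the case $r=1$. Lemma~\ref{claim6} yields two distinct conditions that must hold simultaneously for banks to hedge all of their counter-party risks---the ratio inequality involving the binomial-style sums, and the standalone inequality $u > B_1 - X$---so I would verify each of them independently in the $r=1$ setting.

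For the ratio inequality, first I would substitute $r=1$ into the right-hand side $\frac{2\,u\,r\,\sum_{K=0}^{r}(2r-2K)/[K!\,(2r-K)!]}{2^{2r}-\sum_{K=0}^{r}(2r)!/[K!\,(2r-K)!]}$, which reduces each summation to the two terms $K \in \{0,1\}$. In the numerator sum I would compute $\frac{2}{0!\,2!}+\frac{0}{1!\,1!} = 1+0 = 1$, noting that the $K=1$ contribution vanishes because $2r-2K = 0$ at $K=r$; this is exactly the bracketed quantity displayed inside the corollary. In the denominator I would compute $2^{2}-\left(\frac{2!}{0!\,2!}+\frac{2!}{1!\,1!}\right) = 4 - (1+2) = 1$. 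Combining these values with the prefactor $2\,u\,r = 2u$, the right-hand side collapses to $2u/1 = 2u$, reproducing the stated simplification $B_1 > 2u$.

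For the second condition, I would simply observe that the inequality $u > B_1 - X$ established at the end of the proof of Lemma~\ref{claim6}---derived from the default-threshold calculation $R_H - u < R^{\,\star}_{\ensuremath{\imath},0}\,D^{\,\star} = R_H - B_1 + X$---does not depend on the counter-party count $2r$ at all and therefore transfers verbatim to the case $r=1$.

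There is no real obstacle here; the whole argument is arithmetic. The only step worth a moment's care is the binomial/factorial bookkeeping in the ratio inequality, in particular recognizing that the $K=r$ term in the numerator sum is always zero, which is precisely what lets the right-hand side collapse so cleanly to $2u$ in this specialization.
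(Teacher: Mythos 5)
Your proposal is correct and is exactly the specialization the paper has in mind: substitute $r=1$ into the inequality from Lemma~\ref{claim6}, compute the numerator sum as $1+0=1$ and the denominator as $4-(1+2)=1$ so the bound reduces to $2u$, and note that $u>B_1-X$ is $r$-independent. The arithmetic and the observation that the $K=r$ numerator term vanishes are both right, and the paper offers no separate argument beyond this substitution.
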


\begin{lemma}\label{claim13}
When the number of neighboring counter-parties is less than $\nicefrac{n}{2}$, a bank chooses to shirk if one or more of its counter-parties default 
by leaving the bank unhedged resulting in its debt not being rolled over at $t=1$.
\end{lemma}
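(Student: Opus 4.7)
The plan is to verify that after a counter-party default the standard moral-hazard incentive compatibility (IC) for effort at $t=1$ collapses, so shirking becomes strictly preferred and short-term creditors refuse to roll over.

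First, I would invoke the regime condition: for $2r<\nicefrac{n}{2}$, banks do not purchase counter-party insurance, as will follow from the threshold comparison in Lemma~\ref{claim7} (the private benefit of insurance falls short of its cost in this range of $r$). Hence at $t=1$, when a counter-party defaults, the corresponding OTC hedge is cancelled and the affected bank is exposed to one unhedged shock $\varepsilon\in\{+u,-u\}$ that will be realized at $t=2$.

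Second, I would compute the continuation equity at $t=2$ under effort. Using item (I), the debt obligation is $R^{\,\star}_{\ensuremath{\imath},0}\,D^{\,\star}=R_H-B_1+X$, so the equity payoff equals $\max\{0,\,\varepsilon+B_1-X\}$. The lower half of~\eqref{eq4}, namely $u>B_1-X$, yields zero equity when $\varepsilon=-u$ (the bank defaults at $t=2$) and $u+B_1-X>0$ when $\varepsilon=+u$. Averaging, the expected continuation payoff from effort is $\frac{1}{2}(u+B_1-X)$. Against this, if the bank shirks at $t=1$ it pockets $B_1$, the project returns $R_L$, and by~\eqref{eq1} the residual equity is zero. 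The IC for effort therefore reduces to
\[
\tfrac{1}{2}(u+B_1-X)\;\geq\;B_1,
\]
i.e.\ $u\geq B_1+X$. Since the parameter regime forces $u<B_1+X$ (via~\eqref{eq3} together with the upper bound in~\eqref{eq4}), the IC fails and the bank strictly prefers to shirk. Anticipating the shirk, short-term debt holders refuse to roll over at $t=1$ and the bank is liquidated. An analogous calculation with several cancelled hedges shows the IC is only more severely violated, so a single counter-party default already suffices to trigger the shirking cascade.

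The main obstacle I foresee is disentangling this lemma from Theorem~\ref{main-thm}(e), since both statements concern the $2r<\nicefrac{n}{2}$ regime and the no-insurance equilibrium; to avoid circularity I would derive the no-insurance conclusion purely from the cost--benefit computation of Lemma~\ref{claim7} specialized to this range of $r$, and then feed that as a clean input into the shirking analysis above. A secondary technical check is confirming that~\eqref{eq1}--\eqref{eq4} are jointly satisfiable and force $u<B_1+X$ uniformly in $r$, which follows routinely from~\eqref{eq3}.
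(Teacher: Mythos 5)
Your proposal diverges from the paper in two substantial ways, and the first is a genuine gap.

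\textbf{The $\nicefrac{n}{2}$ threshold cannot come from Lemma~\ref{claim7}.} You try to derive the no-insurance conclusion for $2r<\nicefrac{n}{2}$ ``purely from the cost--benefit computation of Lemma~\ref{claim7} specialized to this range of $r$.'' This cannot work: the threshold in Lemma~\ref{claim7} is $p^{\mathrm{ind}}=\frac{I(1-\beta)}{\beta B_1}$ with $I=ru-B_1+X$, which depends on $r$ but \emph{not on $n$}, so no comparison of $2r$ against $\nicefrac{n}{2}$ can ever emerge from it. Worse, $p^{\mathrm{ind}}$ is \emph{increasing} in $r$, so the cost--benefit inequality $p\geq p^{\mathrm{ind}}$ becomes \emph{harder} to satisfy as $r$ grows --- exactly the opposite of the monotonicity you would need. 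And under the standing hypothesis $p<p^{\star}\leq p^{\mathrm{ind}}$ of Theorem~\ref{main-thm}, the private incentive to insure fails for \emph{all} $r$, so Lemma~\ref{claim7} alone gives no dichotomy at all. The paper instead obtains the $\nicefrac{n}{2}$ cutoff from a separate behavioral postulate introduced directly inside the proof of Lemma~\ref{claim13}: it simply \emph{assumes} that a bank insures precisely when the counter-party default probability $\nicefrac{2r}{n}$ reaches $\nicefrac{1}{2}$. You would need to state and invoke that assumption rather than hope Lemma~\ref{claim7} supplies it.

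\textbf{The shirking/rollover mechanics are argued differently.} Granting the no-insurance regime, you compute the expected continuation equity against one unhedged shock $\varepsilon\in\{+u,-u\}$, use $u>B_1-X$ from~\eqref{eq4} to get default in the bad realization, and compare $\tfrac12(u+B_1-X)$ to $B_1$, invoking $u<B_1+X$. The paper takes a different, more mechanical route: with $n_d$ out of $2r$ neighbors defaulted it sets $P_s=\frac{2r-n_d}{2r}$, $P_f=\frac{n_d}{2r}$, writes the $t=1$ break-even condition $P_s R_{\ensuremath{\imath},1}D_1+P_f R_L=D_1$, plugs into the effort IC, and concludes that only $P_s(R_H+X-B_1)+P_f R_L$ can be rolled over, strictly less than the required $R_{\ensuremath{\imath},0}D_0=R_H+X-B_1$. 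Your version is arguably cleaner in that it ties the failure directly to the unhedged shock and the parameter restriction $u>B_1-X$ (the same mechanism as in Lemma~\ref{claim6}), but it requires the auxiliary inequality $u<B_1+X$, which is not among the stated restrictions and which you assert follows ``routinely'' from~\eqref{eq3} and the $r$-dependent upper bound in~\eqref{eq4} --- that step needs to be written out and checked uniformly in $r$ rather than asserted. The paper's $P_s,P_f$ computation avoids needing any such extra inequality, at the cost of treating the fraction of failed neighbors as a default probability without further justification.

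In short: the IC half of your argument is a legitimately different route that could be made to work with one more lemma on $u<B_1+X$, but the claim that the $2r<\nicefrac{n}{2}$ regime falls out of Lemma~\ref{claim7} is incorrect, and without the paper's explicit $\nicefrac{1}{2}$-probability insurance assumption the proof does not go through.
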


\begin{proof}
\smartqed
If a bank borrows $D^{\max}(R_{l,0})$ at $t = 0$ and if the bank has a low expected realization of $R_L$,
then the debt financing is not rolled over at $t=1$. Since $R_L<L$, the creditors will want to terminate the project. The bank goes bankrupt if its debt financing is not rolled over at $t=1$.

If $2r$, the number of counter-party neighbors, is less than $\nicefrac{n}{2}$, then probability of failure due to counter-party risk is 
less than $\frac{\nicefrac{n}{2}}{n}$, \IE, the probability of counter-party risk is less than $\nicefrac{1}{2}$. Since we assume that banks will consider a 
counter party risk probability of at least $\nicefrac{1}{2}$ to insure against counter-party risk, banks do not insure against counter-party risk using default insurance 
when $2r<\nicefrac{n}{2}$. When $2r\geq\nicefrac{n}{2}$, the probability of counter-party risk becomes at least $\nicefrac{1}{2}$ and hence banks will hedge the counter-party risk. 
The counter-party insurance payoff happens with probability $\frac{2rp}{n}$ in case of private perspective and with probability $\frac{(n-1)p}{n}$ in case of social perspective. 
Thus, as $2r$ increases to $n-1$, the counter-party insurance payoff probability in private perspective becomes the same as that in social perspective. 

When $2r \geq \nicefrac{n}{2} $, the individual banks will hedge the counter-party risk by taking counter-party insurance.
When $ 2r < \nicefrac{n}{2} $, banks will not hedge the counter-party risk and hence failure of a counter-party will lead to the violation of its incentive constraint, thus the bank shirks and the project delivers $R_L$. 
Let $D_1$ be the amount of debt to be rolled over at $t=1$. The investors will demand higher interest rate $R_{\ensuremath{\imath},1}$ in order to break even. 
Let $P_s$ be the probability of a bank that do not default, $P_f$ be the probability of a bank that defaults, and $n_d$ be the number of neighbors of any bank that default.
Thus, $P_s=\dfrac{2r-n_d}{2r}$ and $P_f=\dfrac{n_d}{2r}$. By the break even condition of investors, we get
\[
P_s R_{\ensuremath{\imath},1} D_1 + P_f R_L = D_1
\,\,\equiv\,\,
R_{\ensuremath{\imath},1} D_1  = \dfrac {D_1 -  P_f R_L}{P_s}
\]
The incentive constraint is
\[
\begin{array}{ll}
& \beta (R_H - R_{\ensuremath{\imath},1} D_1   + X ) \geq \beta B_1
\\[0.10in]
\equiv &
\beta \left(R_H - \dfrac {D_1 -  P_f R_L}{P_s} + X \right) \geq \beta B_1
\\[0.10in]
\equiv &
P_s R_H - D_1 +  P_f R_L +P_s X \geq B_1 P_s
\\[0.10in]
\equiv &
P_s R_H - D_1 +  P_f R_L +P_s X \geq B_1 P_s
\\[0.10in]
\equiv &
P_s \big(R_H+X-B_1\big) + P_f R_L \geq D_1 
\end{array}
\]
If the bank had originally borrowed $D_0$, the amount it has to roll over at $t=1$ is $R_{\ensuremath{\imath},0} D_0 = R_H + X -B_1$, 
but the amount that is actually getting rolled over is only $P_s (R_H+X-B_1)+  P_f R_L$. 
\qed
\end{proof}

\begin{corollary}
If number of counter-party neighbors is $2$ ({\em\IE}, $2r=2$) and number of banks defaulted is $1$ ({\em\IE}, $n_d=1$), then 
$P_s=\frac{2r-n_d}{2r}=\nicefrac{1}{2}$, $P_f=\frac{n_d}{2r}=\nicefrac{1}{2}$, and thus 
\[
P_s \big(R_H+X-B_1\big) + P_f R_L \geq D_1 
\,\,\,\equiv \,\,\,
\dfrac {1}{2} \big(R_H+X-B_1\big) + \dfrac {1}{2} R_L \geq D_1 
\]
\end{corollary}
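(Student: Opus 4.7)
The plan is to treat this corollary as a direct specialization of Lemma~\ref{claim13}. Since the preceding lemma already derived the rolled-over inequality $P_s\big(R_H+X-B_1\big)+P_f R_L \geq D_1$ under the assumption $2r<\nicefrac{n}{2}$, the corollary will follow by substituting the specific parameter values $2r=2$ and $n_d=1$ into the formulas $P_s=\frac{2r-n_d}{2r}$ and $P_f=\frac{n_d}{2r}$ and replacing these in the inequality.

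First I would verify that the hypothesis $2r<\nicefrac{n}{2}$ of Lemma~\ref{claim13} is satisfied in the regime to which the corollary applies: the model assumes $n>3$, so $2r=2<\nicefrac{n}{2}$ whenever $n>4$, and the corollary is most meaningfully stated in exactly this regime (which is the contagious case of two counter-parties treated originally in~\cite{Z11}). Next I would compute $P_s=\frac{2-1}{2}=\nicefrac{1}{2}$ and $P_f=\frac{1}{2}=\nicefrac{1}{2}$ by plain arithmetic.

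Finally I would substitute these values into the inequality $P_s\big(R_H+X-B_1\big)+P_f R_L \geq D_1$ supplied by Lemma~\ref{claim13}, obtaining $\tfrac{1}{2}\big(R_H+X-B_1\big)+\tfrac{1}{2}R_L \geq D_1$, which is exactly the stated conclusion. Since every step is a direct substitution, there is essentially no obstacle; the only thing worth flagging is that one should note that for $2r=2$ and $n_d=1$ the bank loses exactly one of its two neighbors, matching the ``at least one counter-party defaults'' contagion scenario from~\cite{Z11} and justifying the use of the lemma's incentive-constraint derivation in this special case.
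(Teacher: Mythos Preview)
Your proposal is correct and matches the paper's approach: the corollary is stated immediately after Lemma~\ref{claim13} with no separate proof, since it is obtained by the direct substitution $2r=2$, $n_d=1$ into the formulas $P_s=\frac{2r-n_d}{2r}$, $P_f=\frac{n_d}{2r}$ and the inequality $P_s(R_H+X-B_1)+P_fR_L\geq D_1$ derived there. Your added remark about the hypothesis $2r<\nicefrac{n}{2}$ is a reasonable sanity check but not something the paper spells out.
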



\section*{Appendix 1: Remaining Proofs}
\addcontentsline{toc}{section}{Appendix 1: Remaining Proofs}

\section*{(I) Proof of $D^{\max}=\frac{R_H- B_1+X}{R_{\ensuremath{\imath},0}}<1$}

The incentive constraint of a bank surviving at $t=1$ and holding no risk can be written as follows:
\begin{gather*}
\text{payoff if bank exerts effort } \geq \text{ payoff if bank does not exert effort}
\\
\,\,\,\equiv\,\,\,
\beta [ R_H- R_{\ensuremath{\imath},0} D+ X ] \geq \beta B_1
\,\,\,\equiv\,\,\,
\dfrac{R_H- B_1+X}{R_{\ensuremath{\imath},0}} \geq  D
\hspace*{-0.2in}
\underbrace{\equiv\,\,\, D^{\max}=\dfrac{R_H- B_1+X}{R_{\ensuremath{\imath},0}} < 1} 
_{\mbox{
\begin{tabular}{c}
since $R_{\ensuremath{\imath},0}\geq 1$
\\
and, by inequality~\eqref{eq3}, $R_H +X-B_1\leq 1$
\end{tabular}
}
}
\end{gather*}
%


\section*{(II) Proof of $s^{\mathrm{safe}} = \frac{1-\beta}{n}- \beta\frac{p}{n}$}

Assume all counter parties are insured. The insurance company has to hold a capital of $2RI$. If all the banks insure against the failure of their counter-parties, 
the insurance fund breaks even if the price per unit of the default insurance $s$ is determined by the break even condition. Using the superscript 
$\scriptstyle\mathrm{safe}$ to denote the insured system, we get
\[
\begin{array}{l}
\text{$\Big($expected amount that remains in the insurance fund at $t=2\Big)$} 
\\
\hspace*{0.4in}
= \text{ $\Big($amount that insurance fund has to set aside at $t=0\Big)$}
\\ 
\equiv \,\,\,
\beta (2rI - 2prI) = 2rI - 2nrIs
\,\,\,\equiv \,\,\,
\beta \big(1-p\big) = 1-ns
\,\,\,\equiv \,\,\,
s^{\mathrm{safe}} = \dfrac{1-\beta}{n}- \beta\dfrac{p}{n} 
\end{array}
\]
%


\section*{(III) Proof of $D^{\mathrm{safe}} = \left(1-\frac{p}{n}\right) (R_H- B_1+X) + \frac{p}{n}L$}

Since the investors break even, we have $\left(1-\dfrac{p}{n}\right)R^{\mathrm{safe}}D^{\mathrm{safe}}+\dfrac{p}{n}L = D^{\mathrm{safe}}$.
Combining with the maximum amount that can be borrowed, we get
\[
D^{\max} =\dfrac{R_H- B_1+X}{R_{\ensuremath{\imath},0}}
\,\,\, \Longrightarrow  \,\,\,
R^{\mathrm{safe}}D^{\mathrm{safe}}=R_H- B_1+X
\]
which proves our claim.


\section*{(IV) Proof of $D^{\star} = (1-p)(R_H- B_1+X) +p\,L$}

If all the banks decide not to buy insurance and hold the minimum amount of equity to roll over debt, the system is contagious. 
The investors will always anticipate the equilibrium and break even, implying 
\[
(1-p)\,R^\star_{\ensuremath{\imath},0}\,D^\star +p\,L = D^\star
\,\,\,\underbrace{\equiv \,\,\, D^{\star} = (1-p)(R_H- B_1+X) +p\,L}_{\mbox{since by {\bf (I)} $\,R^\star_{\ensuremath{\imath},0}\,D^\star = R_H- B_1+X$}}
\]
%


\section*{(V) Proof of $R^\star_{\ensuremath{\imath},0} = \frac{1}{1-p \, \left(1-\frac{L}{R_H- B_1+X} \right) }$}

\begin{multline*}
\underbrace{R^\star_{\ensuremath{\imath},0}\,D^\star = R_H- B_1+X}_{\mbox{by {\bf (I)}}}
\,\,\,
\equiv 
\,\,\,
\underbrace{
R^\star_{\ensuremath{\imath},0}= \dfrac{R_H- B_1+X}{D^\star}
\,\,\,
\equiv \,\,\, R^\star_{\ensuremath{\imath},0}= \dfrac{R_H- B_1+X}{(1-p)(R_H- B_1+X)+pL}}_{\mbox{since, by {\bf (IV)}, 
$D^{\star} = (1-p)(R_H- B_1+X) +p\,L$
}}
\\
\equiv 
\,\,\,
R^\star_{\ensuremath{\imath},0} = \frac{1}{1-p \, \left(1-\frac{L}{R_H- B_1+X} \right)} 
\end{multline*}`
%


\section*{(VI) Proof of $p^{\mathrm{term}} = (1 - \beta)\left(\frac{n}{n-1}\right)\left(\frac {B_0}{R_H + X - L - B_1(1-\beta B_1  ) +\frac{R_L+ \beta B_1 - L}{n-1}}\right)$}

If a bank borrows long term, the banker can borrow up to the point to make him/her indifferent between long-term borrowing and shirking in both periods to collect $B_0+ B_1$.
In long-term lending, investors cannot liquidate the bank under any circumstances. Since shirking in only the second period is clearly suboptimal, the banker shirks in both periods and 
the expected payoff as of $t=0$ is $B_0+B_1$, and thus the payoff in the good state must be at least $B_0+B_1$. This implies:
\begin{gather*}
\left(1-\dfrac{p}{n} \right) \left(R_H-R^{\,\mathrm{term}} D^{\,\mathrm{term}}+X \right)+\dfrac{p}{n} B_1 \geq B_0+ B_1
\end{gather*}
Considering the worst case scenario
\begin{multline}
\left(1-\dfrac{p}{n} \right) \left(R_H-R^{\,\mathrm{term}} D^{\,\mathrm{term}}+X \right)+\dfrac{p}{n} B_1 =  B_0+ B_1
\\
\equiv \,\,\,
R^{\,\mathrm{term}} D^{\,\mathrm{term}} = R_H+X-\dfrac{ B_0+ B_1-\dfrac{p}{n} B_1}{1-\dfrac{p}{n}}
\label{teq1}
\end{multline}
Since the lenders break even in expectation at $t=0$, we have 
\begin{multline*}
D^{\,\mathrm{term}} = \left(1-\dfrac{p}{n}\right) R^{\,\mathrm{term}} D^{\,\mathrm{term}}+\dfrac{p}{n} R_L
\,\,\, \Longrightarrow \,\,\,
\underbrace{D^{\,\mathrm{term}} =  \left(1-\dfrac{p}{n}\right)(X+R_H)-B_0-  B_1+\dfrac{p}{n}  B_1 +\dfrac{p}{n} R_L}_{\mbox{using~\eqref{teq1}}}
\\
\equiv \,\,\,
D^{\,\mathrm{term}} =  X+R_H-\dfrac{p}{n}X-\dfrac{p}{n}R_H-B_0-  B_1+\dfrac{p}{n}  B_1 +\dfrac{p}{n} R_L
\end{multline*}
Banks will not decide to deviate from contagious equilibrium provided the following holds:

\begin{longtable}{ll}
& payoff in contagious system $\,\geq\,$ payoff in long-term borrowing
\\
[0.1in]
$\displaystyle \equiv$ & 
$\displaystyle \beta (1-p).  B_1 - (1-D^{*}) \geq \beta (B_0+ B_1) - (1- D^{\,\mathrm{term}})$
\\
[0.1in]
$\displaystyle \equiv$ & 
$\displaystyle \beta B_1 -p\beta B_1 -1 +R_H+X-B_1 -pR_H-pX+pB_1+pL$
\\
& 
\hspace*{1in} 
$\displaystyle \geq \beta B_0 +\beta B_1-1+X+R_H-\dfrac{p}{n} X-\dfrac{p}{n} R_H - B_0 - B_1+\dfrac{p}{n} B_1 +\dfrac{p}{n} R_L$
\\
[0.1in]
$\displaystyle \equiv$ & 
$\displaystyle B_0 - \beta B_0 \geq \dfrac{p}{n} \Big(n\beta B_1 +nR_H+nX - nB_1 - nL- X- R_H +B_1+ R_L + L - L+ \beta B_1 - \beta B_1 \Big)$
\\
[0.1in]
$\displaystyle \equiv$ & 
$\displaystyle n B_0 (1 - \beta) \geq p(\beta B_1 (n-1) + R_H (n-1) + X ( n-1) - B_1 (n -1 ) - L ( n - 1 ) + R_L+ \beta B_1 - L)$
\\
[0.1in]
$\displaystyle \equiv$ & 
$\displaystyle \dfrac {n}{n-1} B_0 (1 - \beta) \geq p \left(\beta B_1  + R_H + X  - B_1  - L  +\dfrac{ R_L+ \beta B_1 - L}{n-1} \right)$
\\
[0.1in]
$\displaystyle \equiv$ & 
$\displaystyle p \leq (1 - \beta) \, \left( \dfrac {n}{n-1} \right) \, \left( \dfrac {B_0}{ R_H + X - L - B_1(1-\beta B_1  ) +\dfrac{ R_L+ \beta B_1 - L}{n-1}} \right)$
\\
[0.1in]
$\displaystyle \Longrightarrow$ & 
$\displaystyle p^{\,\mathrm{term}} = (1 - \beta) \, \left( \dfrac {n}{n-1} \right) \, \left( \dfrac {B_0}{ R_H + X - L - B_1(1-\beta B_1  ) +\dfrac{ R_L+ \beta B_1 - L}{n-1}} \right)$
\end{longtable}
%

\section*{(VII) Proof of $p^{\,\mathrm{f.aut}} = \frac{(1- \beta) (R_H + X - B_1 - R_L)}{R_H + X - L -B_1(1 - \beta) - \frac{\beta}{n}(R_H - R_L)}$}

Since a bank survives at $t=2$ even if its real project only delivers $R_L$, we have 
\[
\begin{array}{ll}
& \text{payoff in contagious system } \geq \text{ payoff in full autarky}
\\[0.1in]

\equiv & 
\beta (1-p) B_1 - (1- D^\star) \, \geq \, \beta \left(1- \dfrac{p}{n} \right)(R_H + X - R_L )+ \beta \dfrac{p}{n} X - \big(1 - R_L\big)
\\[0.1in]
\equiv & 
\beta B_1 - p \beta B_1 - 1+R_H+X-B_1-pR_H-pX+pB_1+pL \, \geq \, 
\\[0.1in]
 & 
\beta R_H + \beta X - \beta R_L -\dfrac{p}{n} \beta R_H -\dfrac{p}{n} \beta X +\dfrac{p}{n} \beta R_L + \beta \dfrac{p}{n} X - \big(1 - R_L\big)
\\[0.1in]
\equiv & 
\beta B_1 +R_H+X-B_1- \beta X +\beta R_L- R_L- \beta R_H \, \geq \, p\left(\beta B_1 + R_H + X -  B_1 - L - \dfrac{\beta R_H}{n}+\dfrac{\beta R_L}{n} \right)
\\[0.1in]
\equiv & 
R_H(1 - \beta )+X(1 - \beta )-B_1(1 - \beta) - R_L(1 - \beta) \, \geq \, p \left( R_H + X - L -  B_1(1 - \beta ) - \dfrac{\beta}{n} \left(R_H - R_L\right) \right)
\\[0.1in]
\equiv & 
p \leq \dfrac{(1- \beta) (R_H + X - B_1 - R_L)}{R_H + X - L -B_1(1 - \beta) - \dfrac{\beta}{n}(R_H - R_L)}
\,\,\, \equiv \,\,\,
p^{\,\mathrm{f.aut}} = \dfrac {(1- \beta) (R_H + X - B_1 - R_L)}{R_H + X - L -B_1(1 - \beta) - \dfrac{\beta}{n}(R_H - R_L)}
\end{array}
\]
%

\section*{(VIII) Proof of $I= ru+X-B_1 > 0$}

The amount of debt reduction needed at time $t=1$ to stabilize a bank that lost at least $r$ counter-parties is $I=ru+X-B_1 > 0$. 
If less than $r$ counter-party neighbors fail, then the fraction of failed counter-parties of the bank is less than $\nicefrac{1}{2}$, hence bank can survive this loss.
If a bank looses at least $r$ hedges at $t=1$, it can only roll over its debt if its incentive constraint is not violated. 
This can be insured by injecting enough equity into the bank to make sure it can pay back its debt even if it loses $ru$ at $t=2$ due to the lost hedge.
Thus, we have 
\[
R_H - ru -  R{}^{\max}_{\ensuremath{\imath},1} D^{\max}+I \geq 0
\,\equiv\, R^{\max}_{\ensuremath{\imath},1} D^{\max} = R_H+X-B_1
\,\Longrightarrow\, I = ru+X-B_1
\]

\section*{Appendix 2: Glossary of financial terminologies}
\addcontentsline{toc}{section}{Appendix 2: Glossary of financial terminologies}

\begin{description}
\item[\bf Risk:] 
Risk is a chance that an investment's actual return will be less than expected.

\item[\bf Asset:] 
An asset is anything that is owned by an individual or business that has a monetary value.

\item[\bf Counter-party Risk:] 
Risk that one party in an agreement defaults on its obligation to repay or return securities.

\item[\bf Hedging:]
A strategy to reduce the risk by making a transaction in one market to protect against the loss in another market.

\item[\bf Over The Counter (OTC):]
A market that is conducted between dealers by telephone and computer and {\em not} on a listed exchange. 
OTC stocks tend to be of those companies that do not meet the listing requirements of an exchange, although some companies that do meet the listing requirements 
choose to remain as OTC stocks. The deals and instruments are generally not standardized and there is {\em no} public record of the price associated with any transaction.

\item[\bf Equity:] 
The amount that shareholders own, in the form of common or preferred stock, in a publicly quoted company. 
Equity is the risk-bearing part of the company's capital and contrasts with debt capital which is usually secured in some way and which has 
priority over shareholders if the company becomes insolvent and its assets are distributed.

\item[\bf Liquidity:]
The degree to which an asset or security can be bought or sold in the market without affecting the asset's price, \IE, the ability to convert an asset to cash quickly.

\item[\bf Credit Default Swap (CDS):] 
A specific type of counter-party agreement that allows the transfer of third party credit risk from one party to the other. 
One party in the swap is a lender and faces a credit risk from a third party, and the counter-party in the swap agrees to insure this risk in exchange for {\em regular periodic} payments.
\end{description}

\section*{Appendix 3: List of notations and variables}
\addcontentsline{toc}{section}{Appendix 3: List of notations and variables}

\renewcommand{\arraystretch}{1.5}
\begin{longtable}{r c p{4.2in}}
${n}$ & : & total number of banks in the network.
\\
${t}$ & : & the time variable. Three time periods $t=0,1,2$ are considered (initial, interim and final time period, respectively).
\\
${\beta}$ & : & the discount factor (we assume that $\beta<1$).
\\
${R}$ & : & the borrowing rate. $R_0$ and $R_1$ are the borrowing rates at $t=0$ and at $t=1$, respectively.
\\
${\varepsilon}$ & : & an independent random variable, realized at $t=2$, that takes a value of $+u$ or $-u$, each with probability $\nicefrac{1}{2}$. 
\\
${L}$ & : & the return if the investments are liquidated early at $t=1$ $L(<R_H)$.
\\
${X}$ & : & Non-pledgable payoff.
\\
${r}$ & : & each bank is assumed to have $2r$ counter-party neighbors.
\\
${D}$ & : & 
investors share of investment in the bank at $t=0$; $1-D$ is the bankers share of investment in the bank at $t=0$ ($D,1-D \geq 1$).
\\
${e}$ & : & the unobservable effort choice made by the bank ($e \in\{0,1\}$).
\\
${B_i}$ & : & Bankers private benefit at the time period $i$.
\\
${p}$ & : & the probability of bad state.
\\
${p^{\mathrm{soc}}}$ & : & if $p<p^{\mathrm{soc}}$, then irrespective of the number of counter-party neighbors there is 
no need for any counter-party insurance even from a social perspective. 
\\
${p^{\mathrm{ind}}}$ & : & 
if $p>p^{\mathrm{ind}}$ then the banks will not buy any counter-party insurance as the private benefits of insuring exceed the cost. 
\\
${p^{\mathrm{term}}}$ & : & 
if $p<p^{\mathrm{term}}$ then the banks will continue to prefer short-term debt.
\\
${p^{\mathrm{aut}}}$ & : & 
if $p<p^{\mathrm{aut}}$ then no bank will have an incentive to hold more equity and borrow less. 
$p^{\mathrm{aut}}$ is equal to $\min \left\{ p^{\mathrm{s.aut}},p^{\mathrm{r.aut}},p^{\mathrm{f.aut}}\right\}$
where the superscripts $\scriptstyle\mathrm{s.aut}$, $\scriptstyle\mathrm{r.aut}$ and $\scriptstyle\mathrm{f.aut}$ stand for 
safe autarky, risky autarky and full autarky, respectively.
\\
${D^{\max}}$ & : & The maximum amount of borrowing at $t=0$ that can be rolled over at $t=1$.
\\
${s^{\mathrm{safe}}}$ & : & The price per unit of default insurance in a stable system where all banks buy counter-party insurance.
\\
${D^{\mathrm{safe}}}$ & : & 
The amount borrowed by the bank for unit investment in the project in a stable system where all banks buy counter-party insurance.
\\
${D^{\,\star}}$ & : & 
The amount borrowed by the bank for unit investment in the project in a contagious system where banks decide not to buy insurance.
\end{longtable}
\renewcommand{\arraystretch}{1}

\end{document}